\documentclass[16pt]{article}

% Language setting
% Replace `english' with e.g. `spanish' to change the document language
\usepackage[english]{babel}

% Set page size and margins
% Replace `letterpaper' with`a4paper' for UK/EU standard size
\usepackage[a4paper,top=2cm,bottom=2cm,left=3cm,right=3cm,marginparwidth=1.75cm]{geometry}

% Useful packages
\usepackage{amsmath}
\usepackage{amsfonts}
\usepackage{amssymb}
\usepackage{amsthm}
\usepackage{mathalfa}
\usepackage{mathrsfs}
\usepackage{graphicx}
\usepackage{tikz}
\usepackage[colorlinks=true, allcolors=blue]{hyperref}
\usepackage{multirow}
\newtheorem{facts}{Fact}

\newtheorem{theorem}{Theorem}[section]
\newtheorem{corollary}{Corollary}[theorem]
\newtheorem{lemma}[theorem]{Lemma}
\newtheorem{definition}{Definition}
\usepackage[title]{appendix}
\usepackage{comment}
\usepackage{algorithm}
\usepackage{eqparbox}
\usepackage{algorithmicx}
\usepackage{algpseudocode}

\newcommand{\EP}[1]{\mathscr{E}_{#1}} %order r
\newcommand{\VP}[1]{\mathscr{V}_{#1}}
\usepackage[numbers]{natbib}
\usepackage{float}
\usepackage{hhline}

\newcommand*\samethanks[1][\value{footnote}]{\footnotemark[#1]} 
\algnewcommand{\algorithmicgoto}{\textbf{go to}}%
\algnewcommand{\Goto}[1]{\algorithmicgoto~\ref{#1}}%

\title{Finding Maximum Cliques in Large Networks}
\author{S.~Y. Chan\thanks{Deakin University, Geelong, Australia, School of Information Technology, Faculty of Science Engineering \& Built Environment, \underline{Australia}}        
    \and    
    K. Morgan\samethanks
    \and
    J. Ugon\samethanks[1]}

\date{}

\begin{document}
\maketitle

\begin{abstract}
There are many methods to find a maximum (or maximal) clique in large networks. Due to the nature of combinatorics, computation becomes exponentially expensive as the number of vertices in a graph increases. Thus, there is a need for  efficient algorithms to find a maximum clique. In this paper, we present a graph reduction method that significantly reduces the order of a graph, and so enables the identification of a maximum clique in graphs of large order, that would otherwise be computational infeasible to find the maximum. We find bounds of the maximum (or maximal) clique using this reduction. We demonstrate our method on real-life social networks and also on Erd\"{o}s-Renyi random graphs.
\end{abstract}

\section{Introduction}
Networks are everywhere, including biological, traffic, communication and social networks~\cite{costa2011,goldenberg2010,vandijk2012}. In recent years, the study of social networks have received increasing levels of attention~\cite{omalley2008}. Social network analysis uses graph theory to analyse social structure and relations among people and among groups in organisations~\cite{iniguez2020}. Important data or information can be obtained through social network analysis. Such data are often useful in fields such as marketing, economics and industrial engineering \cite{wasserman1989}.

Graphs are efficient tools for modelling relationships and the dynamics within such networks. In a graph, the vertices model entities and edges model relationships between entities. Many researchers are interested in studying substructures of networks, which is equivalent to analysing subgraphs in a graph. A \emph{subgraph} is a subset of vertices and edges, and an \emph{induced subgraph} is the subgraph formed by the subset of vertices and \emph{all} edges connecting pairs of vertices in that subset. 

Networks can be analysed by counting and classifying subgraphs \emph{within} a graph~\cite {aparicio2014parallel, bera_et_al,  chakaravarthy2016_counting, fomin2012, kloks2000, maugis2018, ribeiro2010parallel, ribeiro2019survey}. The subgraph counting problem~\cite{fomin2012, ribeiro2019survey} consists of counting the number of subgraphs that are isomorphic to a given pattern graph $H$ in a  graph $G$. Counting subgraphs is a hard problem, since it generalises the subgraph isomorphism problem, which is NP-complete~\cite{cook1971theorem}.  
Given the hardness of this problem, as many networks are increasingly large, there is a need for  efficient algorithms to count subgraphs in a timely manner, or even to obtain good bounds on counts of these subgraphs. A comprehensive comparison between exact and approximate subgraph counting algorithms is given in \cite{ribeiro2019survey}. 

In this paper, we introduce a novel graph reduction technique, which we use as a preliminary step to count complete subgraphs in large networks. In \cite{chan2022}, we generalised the concept of a \emph{rich-club} \cite{colizza2006,csigi2017,vaquero13,zhou2004}.
While the rich-club ranks vertices by their degree (i.e., the number of complete graphs of order 2 incident to the vertex), our generalisation of the rich-club ranks vertices by the number of complete graphs of order $r$ incident to the vertex.

We derived two new measures, namely, the \emph{vertex-participation of order $r$} and the \emph{edge-participation of order $r$} that count the number of complete subgraphs of order $r$ incident to a given vertex or edge respectively \cite{chan2022}. Our graph reduction algorithm uses the vertex- and edge-participation of small orders to identify vertices and edges that cannot belong to any clique of order $k$ or greater. These vertices and edges are removed as part of the graph reduction. We then recursively remove all vertices of degree less than $k-1$.

Thus, the graph $G$ is reduced to a graph of smaller order that still retains all cliques of order $\geq k$.  If the reduced graph is the empty graph, then no clique of size $k$ exists in $G$. Otherwise, the size of the maximum clique in the reduced graph provides a lower bound on the maximum clique in $G$. 

We apply this method to find large cliques in real-world social networks. We also present experimental results on the efficacy of our approach in finding the size of a maximum/maximal clique in randomly generated graphs. The reduction in the graph order enables the use of existing algorithms on graphs that would usually be infeasible.

This paper is organised as follows: Section \ref{section 2} gives some basic notations and definitions used in this paper. In Section \ref{section 3}, we describe our graph reduction. In Section \ref{section:SNAP}, we demonstrate our reduction method on real-life social network data. Section \ref{section 5} gives some results on Erd\"{o}s-R\'{e}nyi random graphs, and compare time taken to find a maximal clique in the original graphs and reduced graphs. Section \ref{conclusion} concludes the paper and discusses future work within this project. 

%%%%%%%%%%%%%%%%%%%%%%%%%%%%%%%%%%%%%%%%%%%%%%%%%%%%%%%%%%%%%%%%%%%%%%%%%%%%%
\section{Notation and Definitions}\label{section 2}

In this section, we provide necessary definitions and terminologies that are used throughout this paper. All graphs in this paper are \emph{simple} unless stated otherwise.

A graph $G$ is a pair $(V,E)$, such that $V$ is the (finite) set of vertices and $E$ is the set of edges. The \emph{order} of a graph refers to the number of vertices, whereas the \emph{size} of a graph refers to the number of edges. Let $u,v\in V(G)$, we say that $u$ is \emph{adjacent} to $v$ if there exists an edge $\{u,v\}\in E(G)$. We say that the edge $\{u, v\}$ is incident to vertices $u$ and $v$.

Let $G$ and $G_{1}$ be graphs of order $n$ and $k$ respectively, where
$n\geq k$. We say that $G_{1}$ is a \emph{subgraph} of $G$ if $V(G_{1})\subseteq V(G)$ and $E(G_{1})\subseteq E(G)$. The graph $G_{1}$ is an \emph{induced subgraph} of $G$ if all the edges between the pairs of vertices in $V(G_{1})$ from $E$ are in $E(G_{1})$, denoted $G_{1}\subseteq_{i} G$.

A complete graph of order $k$ is a graph where every pair of vertices are connected by an edge, denoted $K_{k}$. A \emph{clique} of size $k$ is a complete subgraph of order $k$ in a graph $G$.

The \emph{degree} of a vertex $v$ which we denote as $\delta(v)$ is the number of edges incident to $v$. The \emph{density} of a graph $G$ denoted $\rho(G)$ is the number of edges in $G$ over all possible edges, that is 
\begin{equation*}
\rho(G) = \dfrac{|E(G)|}{\binom{|V(G)|}{2}}=\dfrac{2|E(G)|}{|V(G)|^2-|V(G)|}
\end{equation*}
where $0\leq \rho(G) \leq 1.$

A universal vertex $v$ of a graph $G$ is a vertex that is connected to every other vertex in $G$. A universal vertex has degree $\delta(v)=|V(G)|-1$. A vertex $v$ \textit{belongs} to $K_{k}$ if $v\in V(K_{k})$.  Similarly, an edge $e$ \textit{belongs} to $K_{k}$ if $e\in E(K_{k}).$

%%%%%%%%%%%%%%%%%%%%%%%%%%%%%%%%%%%%%%%%%%%%%%%%%%%%%%%
\section{Graph Reduction} \label{section 3}

Due to the nature of combinatorics, counting cliques of order $k$ in a graph $G$ becomes computationally expensive as the order of $G$ and $k$ increase. Thus, it is infeasible to find a maximum clique in graphs of large order. If a large graph can be reduced to a graph of smaller order which retains the original maximum clique(s), it may be possible to reduce computation time and memory required to find a maximum clique. In this section, we introduce a graph reduction method that reduces the order of the graph $G$ without compromising the size of the maximum clique.

Several graph reduction techniques have been studied in order to reduce computation time. In most cases, a  graph of large order is reduced to a graph with far fewer vertices, while still retaining important information of the graph. One of the techniques for graph reduction is to use optimisation to reduce distance in graphs, or even incomplete \textbf{LU} factorisation~\cite{chen2022} (a technique in linear algebra used to decompose a matrix as the product of a lower triangular and upper triangular matrix).

Our reduction technique has three steps, which remove edges and vertices that do not belong in any  clique of order at least $k$. The first two steps use the concepts of \emph{vertex}- and \emph{edge-participation} of order $r$. These were first introduced in \cite{chan2022} in a generalisation of the rich-club.  The \emph{vertex-participation  of order $r$} of a vertex $v$ 
 is given by:
\[\VP{r}(v)=|\{G'\subseteq_{i}G:v\in V(G'),G'\cong K_{r}\}|.\]
Similarly, the \emph{edge-participation  of order $r$} of an edge $e$ 
is given by:
\[\EP{r}(e)=|\{G'\subseteq_{i}G:e\in E(G'),G'\cong K_{r}\}|.\]

We will show that any edge that has edge-participation of $\EP{r}(e)<\binom{k-2}{r-2}$ and any vertex of vertex-participation $\VP{r}(v)<\binom{k-1}{r-1}$ cannot belong in any clique of order $k \geq r$. Thus, any vertices and edges that do not meet these thresholds are removed from the graph in the first two steps of reduction.  

The final step in our reduction method is to recursively remove all vertices of degree $\delta(v)<k$. A \textit{$r$-core} of a graph $G$ is a maximal connected induced subgraph of $G$, such that all vertices have degree at least $r$. The third step essentially finds a $(k-1)$-core  of the graph obtained by the reduction in the first two steps.  We call the graph obtained by our reduction the \textit{$k$-nub}.  Figure \ref{fig:kcore} compares the $3$-core and the $4$-nub of a graph $G$. Although both these graphs have minimum degree $\geq 4$, the 3-core has over twice the vertices of the 4-nub.

\begin{definition}
The $k$-nub of a graph $G$ is the ($k-1$)-core of the reduced graph, obtained by deleting vertices with vertex-participation  $\VP{r}(v)<\binom{k-1}{r-1}$
and deleting edges with  edge-participation $\EP{r}(e)<\binom{k-2}{r-2}$.
\end{definition}

\begin{figure}[H]
    \centering
    \begin{tikzpicture}
     [scale=.7, auto=left, every node/.style={circle, fill=black}]
     \node[label=left:{0}] (n0) at (0,0) {};
     \node[label=left:{1}] (n1) at (0,-1.5) {};
     \node[label=left:{2}] (n2) at (0,-3) {};
     \node[label=above:{3}] (n3) at (1.5,0) {};
     \node[label=right:{4}] (n4) at (3,0) {};
     \node[label=95:{5}] (n5) at (1.5,-1.5) {};
     \node[label=45:{6}] (n6) at (3,-1.5) {};
     \node[label=below:{7}] (n7) at (1.5,-3) {};
     \node[label=below:{8}] (n8) at (3,-3) {};
     \node[label=left:{9}] (n9) at (3,1.5) {};
     \node[label=right:{10}] (n10) at (4.5,1) {};
     \node[label=30:{11}] (n11) at (4.5,-1.5) {};
     \node[label=270:{12}] (n12) at (4.5,-3) {};
     \node[label=-30:{13}] (n13) at (5.5,-4) {};
     \node[fill=none, draw=none] (n14) at (2.3,-5) {$G$};
     \draw (n0) edge (n1);
     \draw (n0) edge (n3);
     \draw (n0) edge (n5);
     \draw (n1) edge (n2);
     \draw (n1) edge (n5);
     \draw (n2) edge (n5);
     \draw (n2) edge (n7);
     \draw (n3) edge (n4);
     \draw (n3) edge (n5);
     \draw (n4) edge (n5);
     \draw (n4) edge (n6);
     \draw (n4) edge (n9);
     \draw (n4) edge (n10);
     \draw (n5) edge (n6);
     \draw (n5) edge (n7);
     \draw (n5) edge (n8);
     \draw (n6) edge (n8);
     \draw (n6) edge (n11);
     \draw (n6) edge (n12);
     \draw (n7) edge (n8);
     \draw (n8) edge (n11);
     \draw (n8) edge (n12);
     \draw (n8) edge (n13);
     \draw (n9) edge (n10);
     \draw (n11) edge (n12);
     \draw (n11) edge (n13);
     \draw (n12) edge (n13);
     
     \begin{scope}[yshift=-8cm, xshift=-5cm]
     \node[label=left:{0}] (n0) at (0,0) {};
     \node[label=left:{1}] (n1) at (0,-1.5) {};
     \node[label=left:{2}] (n2) at (0,-3) {};
     \node[label=above:{3}] (n3) at (1.5,0) {};
     \node[label=right:{4}] (n4) at (3,0) {};
     \node[label=95:{5}] (n5) at (1.5,-1.5) {};
     \node[label=45:{6}] (n6) at (3,-1.5) {};
     \node[label=below:{7}] (n7) at (1.5,-3) {};
     \node[label=below:{8}] (n8) at (3,-3) {};
     %\node[label=left:{9}] (n9) at (3,1.5) {};
     %\node[label=right:{10}] (n10) at (4.5,1) {};
     \node[label=30:{11}] (n11) at (4.5,-1.5) {};
     \node[label=270:{12}] (n12) at (4.5,-3) {};
     \node[label=-30:{13}] (n13) at (5.5,-4) {};
     \node[fill=none, draw=none] (n14) at (2.3,-5.5) {$3$-core of $G$};
     \draw (n0) edge (n1);
     \draw (n0) edge (n3);
     \draw (n0) edge (n5);
     \draw (n1) edge (n2);
     \draw (n1) edge (n5);
     \draw (n2) edge (n5);
     \draw (n2) edge (n7);
     \draw (n3) edge (n4);
     \draw (n3) edge (n5);
     \draw (n4) edge (n5);
     \draw (n4) edge (n6);
     %\draw (n4) edge (n9);
     %\draw (n4) edge (n10);
     \draw (n5) edge (n6);
     \draw (n5) edge (n7);
     \draw (n5) edge (n8);
     \draw (n6) edge (n8);
     \draw (n6) edge (n11);
     \draw (n6) edge (n12);
     \draw (n7) edge (n8);
     \draw (n8) edge (n11);
     \draw (n8) edge (n12);
     \draw (n8) edge (n13);
     %\draw (n9) edge (n10);
     \draw (n11) edge (n12);
     \draw (n11) edge (n13);
     \draw (n12) edge (n13);
     \end{scope}
     %\draw[ultra thick,red, ->] (7.5,-1.5) -- (9.5,-1.5);
     \draw[ultra thick,red, ->](0.5,-4.5) -- (-3,-7);
      \begin{scope}[xshift=5cm,yshift=-7cm]
     %\node[label=left:{0}] (n0) at (0,0) {};
     %\node[label=left:{1}] (n1) at (0,-1.5) {};
     %\node[label=left:{2}] (n2) at (0,-3) {};
     %\node[label=above:{3}] (n3) at (1.5,0) {};
     %\node[label=right:{4}] (n4) at (3,0) {};
     %\node[label=95:{5}] (n5) at (1.5,-1.5) {};
     \node[label=45:{6}] (n6) at (3,-1.5) {};
     %\node[label=below:{7}] (n7) at (1.5,-3) {};
     \node[label=below:{8}] (n8) at (3,-3) {};
     %\node[label=left:{9}] (n9) at (3,1.5) {};
     %\node[label=right:{10}] (n10) at (4.5,1) {};
     \node[label=30:{11}] (n11) at (4.5,-1.5) {};
     \node[label=270:{12}] (n12) at (4.5,-3) {};
     \node[label=-30:{13}] (n13) at (5.5,-4) {};
     \node[fill=none, draw=none] (n14) at (4,-6.5) {$4$-nub of $G$};
     %\draw (n0) edge (n1);
     %\draw (n0) edge (n3);
     %\draw (n0) edge (n5);
     %\draw (n1) edge (n2);
     %\draw (n1) edge (n5);
     %\draw (n2) edge (n5);
     %\draw (n2) edge (n7);
     %\draw (n3) edge (n4);
     %\draw (n3) edge (n5);
     %\draw (n4) edge (n5);
     %\draw (n4) edge (n6);
     %\draw (n4) edge (n9);
     %\draw (n4) edge (n10);
     %\draw (n5) edge (n6);
     %\draw (n5) edge (n7);
     %\draw (n5) edge (n8);
     \draw (n6) edge (n8);
     \draw (n6) edge (n11);
     \draw (n6) edge (n12);
     %\draw (n7) edge (n8);
     \draw (n8) edge (n11);
     \draw (n8) edge (n12);
     \draw (n8) edge (n13);
     %\draw (n9) edge (n10);
     \draw (n11) edge (n12);
     \draw (n11) edge (n13);
     \draw (n12) edge (n13);
     \end{scope}
     \draw[ultra thick,red, ->](4,-4.5) -- (8,-7.5);
    \end{tikzpicture}
    \caption{Example demonstrating the $(k-1)$-core and $k$-nub of $G$.}
    \label{fig:kcore}
\end{figure}
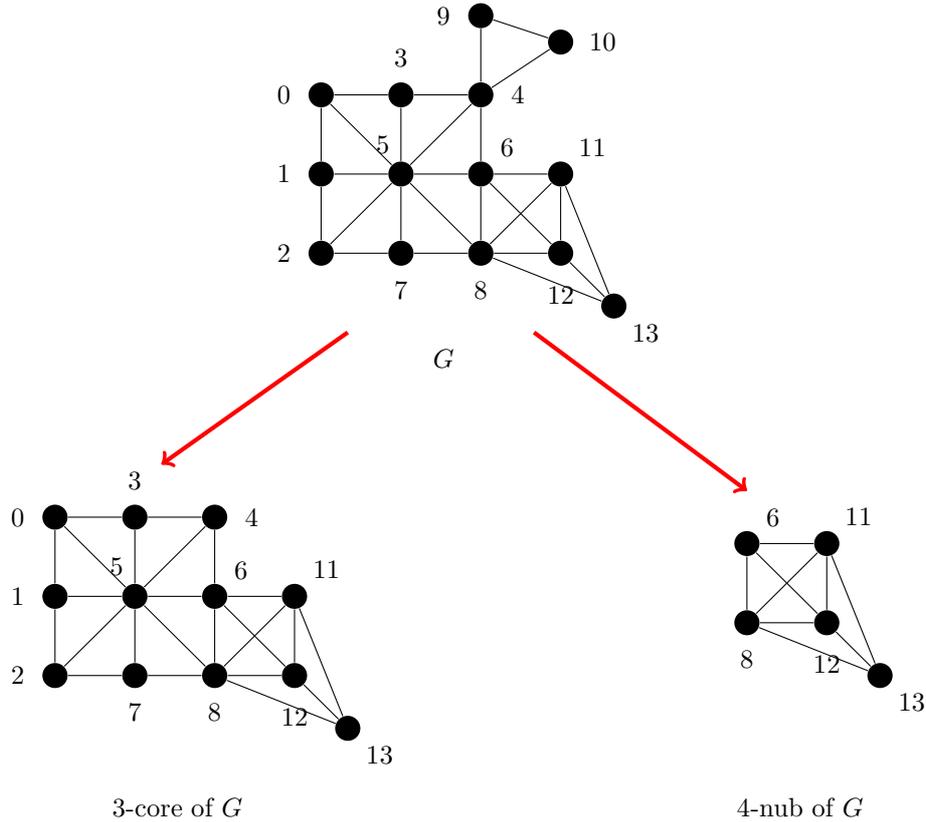

\subsection{The $k$-nub}
\noindent
We first introduce some facts and lemmas that are foundational for our graph reduction:

\begin{facts}
Any $k$-subset of vertices of a complete graph induces a complete subgraph. 
\label{fact1}
\end{facts}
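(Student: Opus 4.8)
The plan is to unwind the definitions: a complete graph is one in which every pair of vertices is joined by an edge, and an induced subgraph on a vertex subset contains \emph{all} edges of the host graph between those vertices. So the statement should follow essentially immediately once both definitions are placed side by side.

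Concretely, I would let $K_n$ be a complete graph on vertex set $V$, fix an arbitrary $k$-subset $S \subseteq V$, and let $H = K_n[S]$ be the induced subgraph. First I would observe that $H$ has exactly $k$ vertices by construction. Then I would take an arbitrary pair of distinct vertices $u, v \in S$; since $K_n$ is complete, $\{u,v\} \in E(K_n)$, and since $H$ is the \emph{induced} subgraph on $S$, every edge of $K_n$ with both endpoints in $S$ lies in $E(H)$, so $\{u,v\} \in E(H)$. As $u,v$ were an arbitrary pair, every pair of vertices of $H$ is adjacent, hence $H \cong K_k$, which is what the statement asserts.

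There is no real obstacle here — the only thing to be careful about is to cite the correct notion of \emph{induced} subgraph (the one introduced in Section~\ref{section 2}), since the analogous claim is false for arbitrary, non-induced subgraphs. I would keep the write-up to two or three sentences and not belabor it, since it is used later only as a bookkeeping fact to justify that sub-cliques of a $K_k$ are themselves cliques.
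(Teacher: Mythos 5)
Your argument is correct: fixing a $k$-subset $S$, noting that the induced subgraph keeps all edges of the host with both endpoints in $S$, and using completeness of the host to conclude every pair in $S$ is adjacent is exactly the intended justification, and your caveat about needing \emph{induced} (not arbitrary) subgraphs is the right point of care. The paper states this as a Fact without proof precisely because it is this immediate definitional check, so your two-to-three-sentence write-up matches the intended level of detail.
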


\begin{facts}
If an edge does not belong to a complete subgraph of order $r<k-1$, then the edge will not belong to a complete graph of order $k$.
\end{facts}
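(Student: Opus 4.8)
The plan is to prove the contrapositive: if an edge $e=\{u,v\}$ belongs to some complete subgraph of order $k$, then $e$ already belongs to a complete subgraph of order $r$ for every admissible $r$ with $2\le r<k-1$. First I would fix a clique $K_k\subseteq_i G$ containing $e$ and set $S=V(K_k)$, so that $u,v\in S$ and $|S|=k$. Since $2\le r<k-1<k$, I can pick a subset $T\subseteq S$ with $\{u,v\}\subseteq T$ and $|T|=r$ — there remain $k-2\ge r-2$ other vertices of $S$ from which to complete $T$.

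Next I would invoke Fact \ref{fact1}: the $r$-subset $T$ of the complete graph $K_k$ induces a complete subgraph, i.e.\ $G[T]\cong K_r$. Because $u,v\in T$ and $G[T]$ is complete, the edge $e=\{u,v\}$ lies in $E(G[T])$, so $e$ belongs to a complete subgraph of order $r$. This contradicts the hypothesis that $e$ belongs to no complete subgraph of order $r$; taking the contrapositive gives exactly the statement of the fact.

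I do not anticipate a genuine obstacle here: the argument is essentially a one-line corollary of Fact \ref{fact1} once the contrapositive is taken. The only point requiring a little care is the bookkeeping on the range of $r$ — one needs $r\ge 2$ for an edge to fit inside a $K_r$, and $r\le k$ for an $r$-subset of $S$ containing both endpoints of $e$ to exist — and both of these are guaranteed by the hypothesis $r<k-1$ together with the implicit convention $r\ge 2$.
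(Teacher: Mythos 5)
Your contrapositive argument is correct: restricting a $K_k$ containing $e$ to an $r$-subset of its vertex set that includes both endpoints of $e$ (possible since $2\le r\le k$) yields, by Fact \ref{fact1}, a complete subgraph of order $r$ containing $e$. The paper states this fact without proof, and your argument is precisely the justification it relies on implicitly (the same counting step appears in its proof that the $k$-nub retains cliques of order $k'\geq k$), so this is essentially the same approach.
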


\begin{facts}
If a vertex does not belong to a complete subgraph of order $r<k-1$, then the vertex will not belong to a complete graph of order $k$.
\end{facts}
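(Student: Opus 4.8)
The plan is to prove the statement by contraposition, letting Fact~\ref{fact1} do all the work. Instead of ``$v$ in no $K_r$ $\Rightarrow$ $v$ in no $K_k$'', I would establish the equivalent implication: if $v$ lies in some complete subgraph of order $k$, then $v$ lies in some complete subgraph of order $r$, for the given $r$ (where $r<k-1$, hence $r\le k$).

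First I would fix, by hypothesis, a complete subgraph $G_1\subseteq_{i}G$ with $G_1\cong K_k$ and $v\in V(G_1)$. Since $r\le k$, choose a set $S\subseteq V(G_1)$ with $|S|=r$ and $v\in S$; this is possible because we may take $v$ together with any $r-1$ of the remaining $k-1$ vertices of $G_1$, and $r-1\le k-1$. Then I would invoke Fact~\ref{fact1} with the ambient complete graph $G_1$ and the $r$-subset $S$: the subgraph induced on $S$ is complete, i.e.\ isomorphic to $K_r$, and by construction it contains $v$. Hence $v$ belongs to a complete subgraph of order $r$, and taking the contrapositive yields exactly the claim. The companion statement for edges stated just above is proved in the same way, replacing ``vertex'' by ``edge'' and the inequality $r-1\le k-1$ by $r-2\le k-2$.

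I do not expect a genuine obstacle here; the only care needed is bookkeeping with the indices — checking $r\le k$ so that an $r$-subset of $V(G_1)$ through $v$ really exists, and noting that ``complete subgraph of order $r$'' can be taken to be induced at no cost, since a complete graph already has all edges among its vertices, so Fact~\ref{fact1} returns the induced subgraph directly. In particular, no estimate of $\VP{r}(v)$ or $\EP{r}(e)$ is required for this fact; the thresholds $\binom{k-1}{r-1}$ and $\binom{k-2}{r-2}$ only enter later when the reduction itself is justified.
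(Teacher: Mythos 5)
Your argument is correct: passing to the contrapositive and restricting a $K_k$ through $v$ to an $r$-subset containing $v$ (possible since $r<k-1\le k$), which induces a $K_r$ by Fact~\ref{fact1}, is exactly the intended justification. The paper states this as a Fact with no explicit proof, so your write-up matches the paper's implicit reasoning and just makes the routine bookkeeping explicit.
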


\begin{facts}
If a vertex $v$ has degree $\delta(v)<k-1$, then it will not belong to any subgraph isomorphic to $K_{k}$.
\end{facts}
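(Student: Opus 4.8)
The plan is to prove the contrapositive: assuming that $v$ belongs to some subgraph of $G$ isomorphic to $K_{k}$, I will show that $\delta(v)\geq k-1$. So suppose $H\subseteq G$ with $H\cong K_{k}$ and $v\in V(H)$. Since $H$ is a complete graph on $k$ vertices, $v$ is joined by an edge of $H$ to each of the other $k-1$ vertices of $H$, and because $G$ (hence $H$) is simple these $k-1$ edges are pairwise distinct.

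Next I would transfer this information from $H$ back to $G$. As $H$ is a subgraph of $G$ we have $E(H)\subseteq E(G)$, so all $k-1$ of those edges are edges of $G$ incident to $v$; therefore $\delta(v)\geq k-1$. Taking the contrapositive gives the statement: if $\delta(v)<k-1$, then $v$ cannot lie in any subgraph of $G$ isomorphic to $K_{k}$. This is precisely the fact that licenses the recursive deletion of vertices of degree less than $k-1$ in the third step of the reduction.

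There is no real obstacle here, as the argument is essentially a one-line degree count. The only points worth stating carefully are that it suffices for $H$ to be a subgraph rather than an induced subgraph (we only need $E(H)\subseteq E(G)$), and that simplicity is what guarantees the $k-1$ incident edges are distinct and hence contribute a full $k-1$ to $\delta(v)$. Note also that, unlike the two preceding facts, this statement is not merely a corollary of them — a vertex of degree less than $k-1$ may still belong to small complete subgraphs — so the direct argument above is the one to use.
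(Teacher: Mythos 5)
Your argument is correct: the contrapositive degree count (a vertex in a $K_{k}$ subgraph has $k-1$ distinct neighbours, so $\delta(v)\geq k-1$) is exactly the reasoning behind this statement, which the paper records as an unproved Fact, so there is nothing in the paper's treatment that your proof diverges from. Your closing remarks on subgraph versus induced subgraph and on simplicity are accurate but not essential.
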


\begin{facts}
Any graph that has a clique of size $k$ has at least $\binom{k}{r}$ cliques of order $r\leq k$.
\label{fact5}
\end{facts}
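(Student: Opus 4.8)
The plan is to argue directly from the definition of a clique together with Fact~\ref{fact1}, with no induction or case analysis needed. Suppose $G$ contains a clique of size $k$, that is, a subgraph $K \cong K_{k}$ on a vertex set $S \subseteq V(G)$ with $|S| = k$. First I would fix an arbitrary $r$ with $r \leq k$ and consider any $r$-subset $T \subseteq S$. By Fact~\ref{fact1}, the subgraph induced by $T$ inside $K$ is isomorphic to $K_{r}$; since the edges of $K$ are edges of $G$, this $K_{r}$ is also a subgraph of $G$, so $T$ is the vertex set of a clique of size $r$ in $G$.

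Next I would count these cliques. The map that sends an $r$-subset $T \subseteq S$ to the clique on $T$ is injective, because distinct vertex subsets give subgraphs with distinct vertex sets, hence distinct cliques. Consequently the number of cliques of order $r$ in $G$ is at least the number of $r$-subsets of $S$, namely $\binom{k}{r}$, which is exactly the claimed bound.

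The main obstacle here is essentially cosmetic rather than mathematical: one must be careful about what is being counted (distinct complete subgraphs of $G$ of order $r$, as opposed to, say, unlabelled isomorphism types), and one should record the standing conventions $r \leq k$ and $\binom{k}{k} = 1$ so the statement is meaningful at both ends of the range. Once the counting object is pinned down, the inclusion $T \mapsto K[T]$ and its injectivity close the argument immediately.
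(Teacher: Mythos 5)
Your proposal is correct and follows the same route the paper intends: the paper states this fact without a written proof, but its justification is exactly Fact~\ref{fact1} applied to each of the $\binom{k}{r}$ $r$-subsets of the $k$-clique's vertex set, with distinct subsets yielding distinct $r$-cliques. Nothing further is needed.
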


\begin{lemma}
If an edge $e\in E(G)$ has $\EP{r}(e)<\binom{k-2}{r-2}$  then  $\EP{k}(e)=0$.
\label{EP}
\end{lemma}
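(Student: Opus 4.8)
The plan is to argue by contrapositive: assume $\EP{k}(e)\geq 1$, i.e.\ the edge $e=\{u,v\}$ belongs to at least one subgraph of $G$ isomorphic to $K_k$, and then show this forces $\EP{r}(e)\geq\binom{k-2}{r-2}$. Fix one such clique $K\cong K_k$ with $e\in E(K)$. I would then consider the collection of vertex subsets $S\subseteq V(K)$ with $|S|=r$ and $\{u,v\}\subseteq S$; each such $S$ is obtained by adjoining to $\{u,v\}$ an arbitrary $(r-2)$-subset of the remaining $k-2$ vertices of $K$, so there are exactly $\binom{k-2}{r-2}$ of them, and they are pairwise distinct.

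The key step is to observe that each such $S$ witnesses a contribution to $\EP{r}(e)$. By Fact~\ref{fact1}, the subgraph of $K$ (hence of $G$) induced by $S$ is a complete graph on $r$ vertices, i.e.\ a copy of $K_r$; since $S\supseteq\{u,v\}$ we have $e\in E(G')$ for this induced subgraph $G'\cong K_r$. Because a complete graph is its own induced subgraph regardless of the ambient graph, distinct sets $S$ give distinct induced subgraphs $G'\subseteq_i G$. Therefore the $\binom{k-2}{r-2}$ sets $S$ inject into the family counted by $\EP{r}(e)=|\{G'\subseteq_i G : e\in E(G'),\,G'\cong K_r\}|$, giving $\EP{r}(e)\geq\binom{k-2}{r-2}$.

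Taking the contrapositive of the implication just established yields exactly the statement of Lemma~\ref{EP}: if $\EP{r}(e)<\binom{k-2}{r-2}$ then $\EP{k}(e)=0$. I would also note at the outset the implicit standing assumptions $2\leq r\leq k$, so that the binomial coefficient $\binom{k-2}{r-2}$ and the notion of edge-participation of order $r$ are meaningful.

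I do not expect any real obstacle here; the only thing to be careful about is the counting of the $r$-subsets of $V(K)$ containing both endpoints of $e$ and the remark that each yields a genuinely \emph{induced} $K_r$ of $G$ (so that it is legitimately counted by $\EP{r}$), rather than merely a complete subgraph. This is immediate from Fact~\ref{fact1} but worth stating explicitly.
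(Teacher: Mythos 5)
Your argument is correct and follows essentially the same route as the paper: the paper's proof also proceeds by noting that if $e$ belonged to a $k$-clique then $\EP{r}(e)\geq\binom{k-2}{r-2}$, contradicting the hypothesis. The only difference is that you spell out the counting of $r$-subsets containing both endpoints of $e$ (via Fact~\ref{fact1}), which the paper asserts without detail.
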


\begin{proof}
Suppose an edge $e \in E(G)$ belongs to at least one clique of order $k$, then $\EP{r}(e)\geq \binom{k-2}{r-2}$. Since $\EP{r}(e)< \binom{k-2}{r-2}$, edge $e$ can not belong to any clique of order $k$ and so $\EP{r}(e)=0$.
\end{proof}

\begin{lemma}
If a vertex $v \in V(G)$ has $\VP{r}(v)<\binom{k-1}{r-1}$ then $\VP{k}(v)=0$.
\label{VP}
\end{lemma}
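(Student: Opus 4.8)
The plan is to prove the contrapositive, exactly paralleling the proof of Lemma \ref{EP}. Assume the vertex $v$ belongs to at least one clique of order $k$, and fix such a clique $H \cong K_k$ with $v \in V(H)$. I will exhibit $\binom{k-1}{r-1}$ distinct induced copies of $K_r$ in $G$, each containing $v$, which forces $\VP{r}(v) \geq \binom{k-1}{r-1}$; the contrapositive of this implication is precisely the statement of the lemma.

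First I would note that $V(H) \setminus \{v\}$ has exactly $k-1$ elements, so there are $\binom{k-1}{r-1}$ subsets $S \subseteq V(H)\setminus\{v\}$ with $|S| = r-1$. For each such $S$, consider the vertex set $S \cup \{v\}$, which has $r$ elements. By Fact \ref{fact1}, the subgraph of $H$ (and hence of $G$) induced on $S \cup \{v\}$ is isomorphic to $K_r$, and it contains $v$. Distinct choices of $S$ give distinct vertex sets $S \cup \{v\}$, hence distinct induced subgraphs. Therefore the collection counted by $\VP{r}(v)$ includes at least these $\binom{k-1}{r-1}$ subgraphs, so $\VP{r}(v) \geq \binom{k-1}{r-1}$.

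Taking the contrapositive: if $\VP{r}(v) < \binom{k-1}{r-1}$, then $v$ cannot lie in any clique of order $k$, that is, $\VP{k}(v) = 0$. The only points requiring a little care are the implicit hypotheses that make the statement meaningful — namely $2 \le r \le k$, so that the binomial coefficients and the induced $K_r$'s are well defined — together with the observation that any subset of a clique induces a complete subgraph, which is exactly Fact \ref{fact1}. I do not anticipate any genuine obstacle here: the argument is simply the vertex analogue of the edge count $\binom{k-2}{r-2}$ appearing in Lemma \ref{EP}, where one instead fixes the two endpoints of $e$ and chooses $r-2$ of the remaining $k-2$ vertices of the clique.
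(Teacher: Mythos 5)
Your proof is correct and follows essentially the same route as the paper: the paper also argues by contraposition, asserting that membership in a $k$-clique forces $\VP{r}(v)\geq\binom{k-1}{r-1}$, and you simply make the underlying count (choosing the $r-1$ companions of $v$ from the remaining $k-1$ clique vertices, via Fact \ref{fact1}) explicit. No gap; your version is just a more detailed write-up of the paper's argument.
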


\begin{proof}
Suppose vertex $v\in V(G)$ belongs to at least one clique of order $k$, then 
$\VP{r}(v)\geq \binom{k-1}{r-1}$.  But $\VP{r}(v)<\binom{k-1}{r-1}$, and so $v$ cannot belong to any clique of order $k$ and  $\VP{k}(v)=0$.
\end{proof}

\begin{theorem}
If $G$ contains a clique of order at least $k'\geq k$, then the $k$-nub also contains a clique of order $k'$.
\end{theorem}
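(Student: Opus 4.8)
The plan is to follow a single $k'$-clique through all three reduction steps and verify that none of its vertices or edges is ever deleted. Since any sub-clique of a clique is again a clique, we may assume $G$ contains a clique $K$ on exactly $k'$ vertices; it then suffices to show that $K$ survives the reduction intact, for then the $k$-nub contains $K$, a clique of order $k'$.

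First I would deal with the two (non-recursive) participation-threshold deletions. Fix a vertex $v\in V(K)$. By Fact~\ref{fact1}, every $k$-subset of $V(K)$ containing $v$ induces a copy of $K_k$ in $G$, and since $k'\geq k$ there is at least one such subset; hence $\VP{k}(v)\geq 1$. The contrapositive of Lemma~\ref{VP} then yields $\VP{r}(v)\geq\binom{k-1}{r-1}$, so $v$ is not deleted in the vertex step. The same reasoning, using Fact~\ref{fact1} and the contrapositive of Lemma~\ref{EP}, shows that every edge $e\in E(K)$ has $\EP{r}(e)\geq\binom{k-2}{r-2}$ and so survives the edge step. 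Thus after the first two steps the reduced graph still contains $K$ as an induced complete subgraph of order $k'$; this remains true whether the participations are measured in the original $G$ or recomputed along the way, because $K$ is present throughout.

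The only genuinely recursive step, and the one I expect to be the main obstacle, is the $(k-1)$-core computation, in which vertices of current degree less than $k-1$ are deleted one by one. Suppose, for contradiction, that some vertex of $K$ is eventually removed, and let $v$ be the \emph{first} vertex of $K$ to go. At the moment just before $v$ is removed, no vertex of $K$ has yet been removed, so the $k'-1$ other vertices of $K$ are all still present, together with the $k'-1$ edges joining them to $v$; hence $v$ has degree at least $k'-1\geq k-1$ at that instant, contradicting that it is being deleted. Therefore no vertex of $K$ is removed in this step, and hence no edge of $K$ is either, since an edge can only disappear when one of its endpoints does. So $K$ is intact in the reduced graph after all three steps.

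It remains only to note that, because $K$ is connected, its vertex set lies within a single connected component of the graph obtained after the peeling, and that component is a connected induced subgraph of minimum degree at least $k-1$ --- that is, a $(k-1)$-core containing $K$. Hence the $k$-nub contains the clique $K$ of order $k'$, which is what we wanted. Apart from the ``first vertex of $K$'' argument for the core step, the proof is a direct application of Fact~\ref{fact1} and Lemmas~\ref{EP} and~\ref{VP}.
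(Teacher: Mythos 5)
Your proof is correct and follows essentially the same route as the paper: the participation thresholds (Lemmas \ref{EP} and \ref{VP}, equivalently the direct counts $\binom{k'-1}{r-1}\geq\binom{k-1}{r-1}$ and $\binom{k'-2}{r-2}\geq\binom{k-2}{r-2}$) protect the clique's vertices and edges in Steps 1--2, and the clique then survives the $(k-1)$-core computation. Your ``first deleted vertex'' argument for the peeling step just spells out explicitly what the paper asserts without detail, so the two proofs are essentially identical in approach.
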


\begin{proof}
The reduction to obtain the $k$-nub has three steps: (1) removing edges with $\EP{r}(e)<\binom{k-2}{r-2}$,  
(2) removing vertices with $\VP{r}(v)<\binom{k-1}{r-1}$ and (3) finding a $(k-1)$-core of the graph obtained by the reduction in steps (1)-(2), for some $r\in [2,k-1]$.  We show that this reduction does not remove cliques of order $k'\geq k$ from the graph. 

If an edge $e$ belongs to a clique of order $k'\geq k$, then it has $\EP{r}(e)\geq \binom{k'-2}{r-2}\geq \binom{k-2}{r-2}$ and so $e$ will not be removed in the first step. 

If a vertex $v$ belongs to a clique of order $k'\geq k$, then it has $\VP{r}(v)\geq \binom{k'-1}{r-1}\geq \binom{k-1}{r-1}$ and so $v$ will not be removed in the second step.

Thus, the graph $G'$ obtained from the first two steps of the reduction must contain a clique of order $k'\geq k$ if and only if $G$ contains a clique of order $k'\geq k$. Any existing clique of order $k'$ in $G'$ has at least $k'$ vertices with degree at least $k'-1\geq k-1$. The final step in the reduction finds the $(k-1)$-core of $G'$, which contains a clique of order $k'\geq k$ if and only if $G'$ contains a clique of order $k'\geq k$. 
\end{proof}

\begin{corollary}
If the $k$-nub of $G$ is the empty graph, then $G$ contains no clique of order $\geq k$.
\end{corollary}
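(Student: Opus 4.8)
The plan is to derive this as an immediate contrapositive of the preceding Theorem. That theorem asserts that whenever $G$ has a clique of order at least $k'\geq k$, the $k$-nub must also contain a clique of that same order $k'$. So I would argue by contradiction: suppose the $k$-nub of $G$ is the empty graph but $G$ nonetheless contains some clique of order $k'\geq k$. Applying the Theorem to this clique, the $k$-nub must contain a clique of order $k'$, hence in particular at least $k'\geq k\geq 2$ vertices. This contradicts the $k$-nub being the empty graph (which has no vertices at all), and the corollary follows.

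The only things worth spelling out are trivial: first, that ``the empty graph'' has vertex set $\emptyset$ and therefore cannot contain a subgraph isomorphic to $K_{k'}$ for any $k'\geq 1$ (let alone $k'\geq k$); and second, that the hypothesis ``$G$ contains no clique of order $\geq k$'' is exactly the negation of ``there exists $k'\geq k$ with a clique of order $k'$ in $G$,'' so the contrapositive of the Theorem's conclusion matches what we want. No case analysis or computation is needed.

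I do not anticipate a genuine obstacle here; the corollary is a one-line consequence once the Theorem is in hand. The only mild care required is to state the contrapositive over all $k'\geq k$ simultaneously (not just $k'=k$), so that ``no clique of order $\geq k$'' — rather than merely ``no clique of order exactly $k$'' — is what we conclude; but since the Theorem is already phrased for arbitrary $k'\geq k$, this is automatic. I would therefore keep the proof to two or three sentences invoking the Theorem directly.
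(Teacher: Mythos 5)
Your proposal is correct and matches the paper's intent: the paper states this corollary without proof, treating it as an immediate contrapositive of the preceding theorem, which is exactly the argument you give. No gaps; the two- or three-sentence invocation of the theorem is all that is needed.
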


\begin{facts}
If there are $<k$ vertices of $\delta(v)\geq k-1$ in $G$, then $G$ has no clique of order $\geq k$.
\end{facts}

\subsection{Algorithm for finding a $k$-nub.}

In this section, we give an overview of the graph reduction algorithm including the choice of parameters used in the process. 

\subsubsection{Counting $r$-cliques}
Our graph reduction algorithm requires a \emph{pre-processing} step that involves counting $r$-cliques in the given graph $G$ for some small $r$. Using the counts of these $r$-cliques, we are able to calculate both the $\EP{r}(e)$ and $\VP{r}(v)$ in $G$. These counts are used in our graph reduction algorithm (see Algorithm \ref{algorithm1}).  

\subsubsection{An upper bound on the maximum clique}
The counts of small cliques in the graph can be used to obtain an initial value of $k$ for the maximum clique.  From Fact \ref{fact1}, if $G$ has a maximum clique of order $k$, then there are at least $\binom{k}{r}$ $r$-cliques for any $k>r$. We choose the largest $k$ that satisfies this inequality for our initial $k$.  However, it is possible to run this algorithm with an estimated $k$.

\subsubsection{The Algorithm }\label{sec:alg}

Our graph reduction algorithm (see Algorithm \ref{algorithm1}) takes as input the graph $G$, the vertex- and edge-participation of order $r$, the largest order $r$ of cliques counted in the pre-processing step and the value $k$ of the maximum clique. 

In the first step (Lines 7-11), we remove edges that do not meet the threshold of the edge-participation of order $r$. From Lemma \ref{EP}, these edges can not belong to any $k$-clique. 
 
Similarly, in the second step (Lines 12-17), we remove vertices that do not meet the threshold of the vertex-participation of order $r$. From Lemma \ref{VP}, these vertices can not belong to any $k$-clique.  

In this Step 3 (Lines 18-21), we find the $(k-1)$-core of the reduced graph.  In the final step of the reduction, we recursively remove vertices that have degree $< k-1$,  which results in the $k$-nub of our graph $G$. The reduction steps are highlighted in red in Algorithm \ref{algorithm1}.

\begin{algorithm}[H]
\caption{Graph reduction}
\label{algorithm1}
\begin{algorithmic}[1]
\Require Graph $G$, $\EP{r}$, $\VP{r}$, $k$, $r$
\Ensure Graph $G'=(V',E')$
\State{$n=|V|$}
\State{$E'=E$}
\State{$V'=V$}
\State{$G'=(V',E')$}
\State{e\_bound$=\binom{k-2}{r-2}$}
\State{v\_bound$=\binom{k-1}{r-1}$}
\For{$e$ \textbf{in} $E'$}%\algorithmiccomment{Step 1} %\algorithmiccomment{Step 1}
\If{$\EP{r}(e)<$e\_bound}%\binom{k-2}{r-2}$}
\State{Remove $e$ from $E'$}
\rlap{\smash{$\left.\begin{array}{@{}p{2em}}\\{}\\{}\\{}\end{array}\color{red}\right\}%
          \color{red}\begin{tabular}{l} Step 1.\end{tabular}$}}
\EndIf
\EndFor
\For{$v$ \textbf{in} $V'$}
\If{$\VP{r}(v)<$v\_bound}
\State{Remove all $e\in E'$ incident to $v$}
\rlap{\smash{$\left.\begin{array}{@{}p{1em}}\\{}\\{}\\{}\\{}\\{}\end{array}\color{red}\right\}%
          \color{red}\begin{tabular}{l} Step 2.\end{tabular}$}}
\State {Delete vertex $v$ from $V'$}
\EndIf
\EndFor
\While{$\exists v\in V'$ with $\delta(v)<k$}
\State{Remove all $e\in E'$ incident to $v$}\rlap{\smash{$\left.\begin{array}{@{}p{5em}}\\{}\\{}\\{}\end{array}\color{red}\right\}%
       \color{red}\begin{tabular}{l} Step 3.\end{tabular}$}}
\State{Delete vertex $v$ from $V'$}

\EndWhile
\State \Return{$G'$}
\end{algorithmic}
\end{algorithm}

Figure \ref{fig:reduction exp} illustrates the 3 main steps of the reduction. First, we count the number of $K_{3}$s in the graph $G$, followed by computing the $\VP{3}(v)$ and $\EP{3}(e)$ of $G$. In order to choose an initial $k$, we first select the largest $k$ based on the $r$-clique counts. In this case, we have $k=5$ as there are 17 ($>\binom{5}{3}=10$) $3$-cliques. We then reduce $k$ further using the edge-participation.  As no edges participate in at least $\binom{5-2}{3-2}=3$ cliques of order 3, we have $k<5$.  No further reduction is possible using the vertex-participation in this case. Thus, now we assume a maximum clique of order  $k=4$ in $G$.

In Step 1, we remove edges that do not participate in at least $\binom{4-2}{3-2}=\binom{2}{1}=2$ $K_{3}$s. In Step 2, we remove vertices that do not participate in at least $\binom{3}{2}=3$ $K_{3}$s. Lastly, we find the 3-core of the reduced graph obtained in Steps 1 and 2 to find the 4-nub of $G$. Finding a $3$-core without Steps 1 and 2 would result in a larger subgraph as it would only remove vertices 9 and 10.

\begin{figure}[H]
    \centering
    \begin{tikzpicture}
    [scale=.7,auto=left, every node/.style={circle, fill=black}]
     \node[label=left:{0}] (n0) at (0,0) {};
     \node[label=left:{1}] (n1) at (0,-1.5) {};
     \node[label=left:{2}] (n2) at (0,-3) {};
     \node[label=above:{3}] (n3) at (1.5,0) {};
     \node[label=right:{4}] (n4) at (3,0) {};
     \node[label=95:{5}] (n5) at (1.5,-1.5) {};
     \node[label=45:{6}] (n6) at (3,-1.5) {};
     \node[label=below:{7}] (n7) at (1.5,-3) {};
     \node[label=below:{8}] (n8) at (3,-3) {};
     \node[label=left:{9}] (n9) at (3,1.5) {};
     \node[label=right:{10}] (n10) at (4.5,1) {};
     \node[label=30:{11}] (n11) at (4.5,-1.5) {};
     \node[label=270:{12}] (n12) at (4.5,-3) {};
     \node[label=-30:{13}] (n13) at (5.5,-4) {};
     \node[fill=none, draw=none] (n14) at (2.3,-5.5) {$G$};
     \draw (n0) edge (n1);
     \draw (n0) edge (n3);
     \draw (n0) edge (n5);
     \draw (n1) edge (n2);
     \draw (n1) edge (n5);
     \draw (n2) edge (n5);
     \draw (n2) edge (n7);
     \draw (n3) edge (n4);
     \draw (n3) edge (n5);
     \draw (n4) edge (n5);
     \draw (n4) edge (n6);
     \draw (n4) edge (n9);
     \draw (n4) edge (n10);
     \draw (n5) edge (n6);
     \draw (n5) edge (n7);
     \draw (n5) edge (n8);
     \draw (n6) edge (n8);
     \draw (n6) edge (n11);
     \draw (n6) edge (n12);
     \draw (n7) edge (n8);
     \draw (n8) edge (n11);
     \draw (n8) edge (n12);
     \draw (n8) edge (n13);
     \draw (n9) edge (n10);
     \draw (n11) edge (n12);
     \draw (n11) edge (n13);
     \draw (n12) edge (n13);
     
     \begin{scope}[xshift=12cm]
     \node[label=left:{0}] (n0) at (0,0) {};
     \node[label=left:{1}] (n1) at (0,-1.5) {};
     \node[label=left:{2}] (n2) at (0,-3) {};
     \node[label=above:{3}] (n3) at (1.5,0) {};
     \node[label=right:{4}] (n4) at (3,0) {};
     \node[label=95:{5}] (n5) at (1.5,-1.5) {};
     \node[label=45:{6}] (n6) at (3,-1.5) {};
     \node[label=below:{7}] (n7) at (1.5,-3) {};
     \node[label=below:{8}] (n8) at (3,-3) {};
     \node[label=left:{9}] (n9) at (3,1.5) {};
     \node[label=right:{10}] (n10) at (4.5,1) {};
     \node[label=30:{11}] (n11) at (4.5,-1.5) {};
     \node[label=270:{12}] (n12) at (4.5,-3) {};
     \node[label=-30:{13}] (n13) at (5.5,-4) {};
     \node[fill=none, draw=none] (n14) at (2.3,-5.5) {Step 1: remove $\EP{3}(e)<2$};
     %\draw (n0) edge (n1);
     %\draw (n0) edge (n3);
     \draw (n0) edge (n5);
     %\draw (n1) edge (n2);
     \draw (n1) edge (n5);
     \draw (n2) edge (n5);
     %\draw (n2) edge (n7);
     %\draw (n3) edge (n4);
     \draw (n3) edge (n5);
     \draw (n4) edge (n5);
     %\draw (n4) edge (n6);
     %\draw (n4) edge (n9);
     %\draw (n4) edge (n10);
     \draw (n5) edge (n6);
     \draw (n5) edge (n7);
     \draw (n5) edge (n8);
     \draw (n6) edge (n8);
     \draw (n6) edge (n11);
     \draw (n6) edge (n12);
     %\draw (n7) edge (n8);
     \draw (n8) edge (n11);
     \draw (n8) edge (n12);
     \draw (n8) edge (n13);
     %\draw (n9) edge (n10);
     \draw (n11) edge (n12);
     \draw (n11) edge (n13);
     \draw (n12) edge (n13);
     \end{scope}
     \draw[ultra thick,red, ->] (7.5,-1.5) -- (9.5,-1.5);
     
     \begin{scope}[yshift=-9cm]
     %\node[label=left:{0}] (n0) at (0,0) {};
     %\node[label=left:{1}] (n1) at (0,-1.5) {};
     %\node[label=left:{2}] (n2) at (0,-3) {};
     %\node[label=above:{3}] (n3) at (1.5,0) {};
     \node[label=right:{4}] (n4) at (3,0) {};
     \node[label=95:{5}] (n5) at (1.5,-1.5) {};
     \node[label=45:{6}] (n6) at (3,-1.5) {};
     %\node[label=below:{7}] (n7) at (1.5,-3) {};
     \node[label=below:{8}] (n8) at (3,-3) {};
     %\node[label=left:{9}] (n9) at (3,1.5) {};
     %\node[label=right:{10}] (n10) at (4.5,1) {};
     \node[label=30:{11}] (n11) at (4.5,-1.5) {};
     \node[label=270:{12}] (n12) at (4.5,-3) {};
     \node[label=-30:{13}] (n13) at (5.5,-4) {};
     \node[fill=none, draw=none] (n14) at (3.3,-6) {Step 2: remove $\VP{3}(v)<3$};
     %\draw (n0) edge (n1);
     %\draw (n0) edge (n3);
     %\draw (n0) edge (n5);
     %\draw (n1) edge (n2);
     %\draw (n1) edge (n5);
     %\draw (n2) edge (n5);
     %\draw (n2) edge (n7);
     %\draw (n3) edge (n4);
     %\draw (n3) edge (n5);
     \draw (n4) edge (n5);
     %\draw (n4) edge (n6);
     %\draw (n4) edge (n9);
     %\draw (n4) edge (n10);
     \draw (n5) edge (n6);
     %\draw (n5) edge (n7);
     \draw (n5) edge (n8);
     \draw (n6) edge (n8);
     \draw (n6) edge (n11);
     \draw (n6) edge (n12);
     %\draw (n7) edge (n8);
     \draw (n8) edge (n11);
     \draw (n8) edge (n12);
     \draw (n8) edge (n13);
     %\draw (n9) edge (n10);
     \draw (n11) edge (n12);
     \draw (n11) edge (n13);
     \draw (n12) edge (n13);
     \end{scope}
     \begin{scope}
         \draw[ultra thick,red, ->](10.5,-5.5) -- (7,-8.5);
     \end{scope}
     \begin{scope}[xshift=12cm,yshift=-9cm]
     %\node[label=left:{0}] (n0) at (0,0) {};
     %\node[label=left:{1}] (n1) at (0,-1.5) {};
     %\node[label=left:{2}] (n2) at (0,-3) {};
     %\node[label=above:{3}] (n3) at (1.5,0) {};
     %\node[label=right:{4}] (n4) at (3,0) {};
     %\node[label=95:{5}] (n5) at (1.5,-1.5) {};
     \node[label=45:{6}] (n6) at (3,-1.5) {};
     %\node[label=below:{7}] (n7) at (1.5,-3) {};
     \node[label=below:{8}] (n8) at (3,-3) {};
     %\node[label=left:{9}] (n9) at (3,1.5) {};
     %\node[label=right:{10}] (n10) at (4.5,1) {};
     \node[label=30:{11}] (n11) at (4.5,-1.5) {};
     \node[label=270:{12}] (n12) at (4.5,-3) {};
     \node[label=-30:{13}] (n13) at (5.5,-4) {};
     \node[fill=none, draw=none] (n14) at (4,-6) {Step 3: remove $\delta(v)<3$};
     %\draw (n0) edge (n1);
     %\draw (n0) edge (n3);
     %\draw (n0) edge (n5);
     %\draw (n1) edge (n2);
     %\draw (n1) edge (n5);
     %\draw (n2) edge (n5);
     %\draw (n2) edge (n7);
     %\draw (n3) edge (n4);
     %\draw (n3) edge (n5);
     %\draw (n4) edge (n5);
     %\draw (n4) edge (n6);
     %\draw (n4) edge (n9);
     %\draw (n4) edge (n10);
     %\draw (n5) edge (n6);
     %\draw (n5) edge (n7);
     %\draw (n5) edge (n8);
     \draw (n6) edge (n8);
     \draw (n6) edge (n11);
     \draw (n6) edge (n12);
     %\draw (n7) edge (n8);
     \draw (n8) edge (n11);
     \draw (n8) edge (n12);
     \draw (n8) edge (n13);
     %\draw (n9) edge (n10);
     \draw (n11) edge (n12);
     \draw (n11) edge (n13);
     \draw (n12) edge (n13);
     \end{scope}
     \draw[ultra thick,red, ->] (8.5,-11.5) -- (12,-11.5);
    \end{tikzpicture}
    \caption{Example demonstrating the 3 steps used to obtain a $k$-nub: (1) edge-deletion based on edge-participation, (2) vertex-deletion based on vertex-participation and (3) recursive vertex deletion based on vertex degree.}
    \label{fig:reduction exp}
\end{figure}
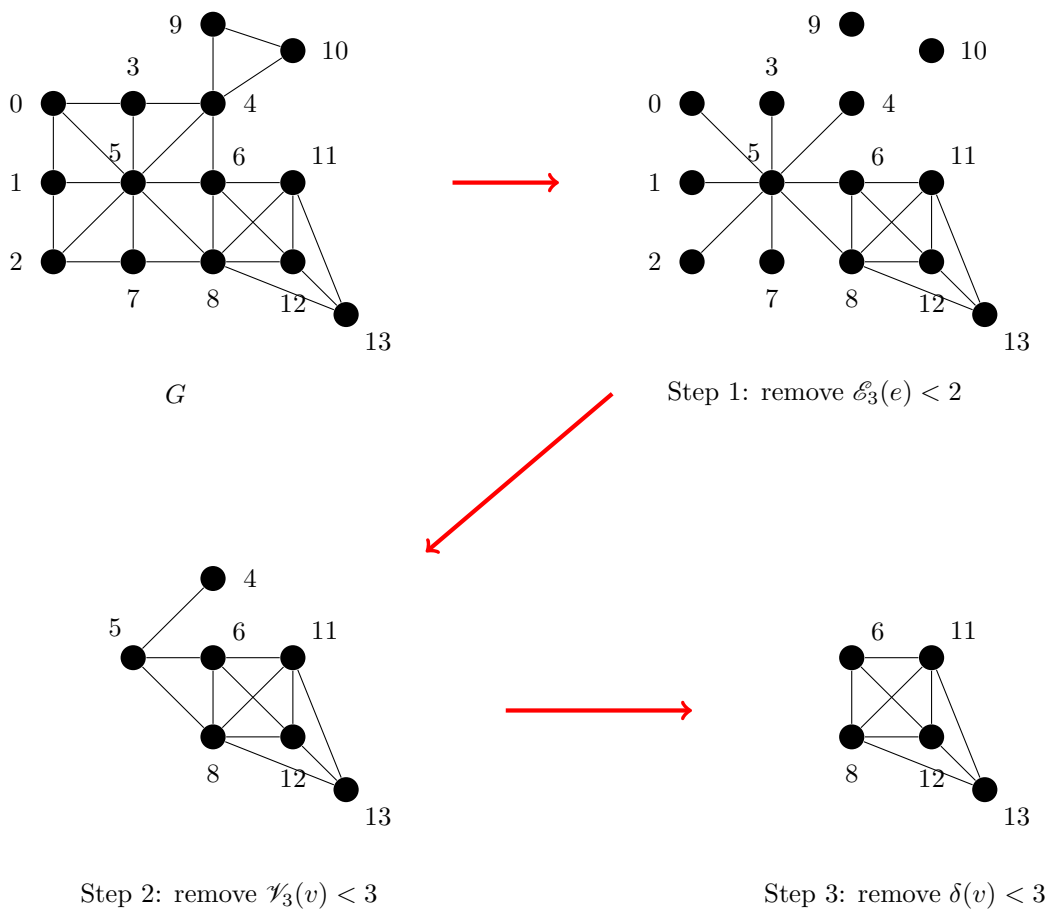

The algorithm has parameters   $r$ (the order of the vertex and edge participation) and $k$ (estimated maximum clique size).  We select the largest $r$ that is computational feasible to count $r$-cliques.  The selection of $k$ is more difficult. An upper bound on the size of $k$ can be determined using the counts of $r$-cliques, the graph must contain at least $\binom{k}{r}$  cliques of order $r$ if it has a $k$-clique. The largest $r$ value used in counting $r$-cliques provides a lower bound $l$ on the size of the maximum clique.

Looking at the reduced graph $G'$, we can refine these bounds.  A number of cases exist: 
\begin{enumerate}
    \item If $G'$ is an empty graph, then our original assumption that a maximum clique of order $k$ exists in $G$ is false. The estimated value of $k$ is too large in this case and we can reduce it.
    \item If the order of $G'$ is less than $k$, then our original assumption that a maximum clique of order $k$ exists in $G$ is false. The estimated value of $k$ can be reduced.  In addition, if the order of the maximal clique in $G'$ is $l'>l$ then we can increase the lower bound to $l'$.
    \item If the order of $G'$ is exactly $k$, then either $G'$ is the maximum clique of $G$ and we are done; or $G'$ has a maximal clique of size $l'$ and the size of the maximum clique in $G$ is in $[l', k-1].$
    %$k$ is too large and that the maximum clique is of order at most $k-1$. In addition, if the order of the maximal clique in $G'$ is $l'>l$ then we can increase the lower bound to $l'$.
    \item If the order of $G'$ is greater than $k$, then the maximum clique in $G'$ is the maximum clique in $G$.  If $G'$ is too large to find a maximum clique, then the order $l'$ of a maximal clique in $G'$ provides a lower bound on the size of the maximum clique. 
\end{enumerate}
In practice, a reasonable guess for $k$ provides sufficient information to obtain an optimal graph reduction in two iterations using a \emph{binary search} approach to find a new value $k'$. We have

\begin{equation}
 k'=\left\lfloor\dfrac{l+k}{2}\right \rfloor.
\label{boundcount}
\end{equation}

\begin{lemma}
The \emph{time complexity} of Algorithm \ref{algorithm1} is $\mathcal{O}(n+m)$, where $n=|V(G)|$ and $m=|E(G)|$.
\end{lemma}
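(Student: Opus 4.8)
The plan is to bound each of the three reduction steps separately, plus the per-element initialisation in Lines~1--6, and to show that with suitable linear-space data structures each contributes $\mathcal{O}(n+m)$; summing the four bounds yields the claim. I would work in the standard word-RAM model and assume $G$ is given by adjacency lists with cross-pointers (so that an edge can be deleted from both incidence lists in $\mathcal{O}(1)$), that the tables $\EP{r}$ and $\VP{r}$ passed as input support $\mathcal{O}(1)$ lookup, and that the thresholds \texttt{e\_bound} $=\binom{k-2}{r-2}$ and \texttt{v\_bound} $=\binom{k-1}{r-1}$ are computed once in time independent of $n$ and $m$. It is worth stressing at the outset that the (expensive) counting of $r$-cliques needed to build $\EP{r}$ and $\VP{r}$ is the pre-processing step and is explicitly \emph{not} part of Algorithm~\ref{algorithm1}.

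For Lines~1--6 and Step~1 (Lines~7--11) the analysis is immediate: copying $V$ and $E$ costs $\mathcal{O}(n+m)$, and Step~1 scans each edge once, doing an $\mathcal{O}(1)$ comparison against \texttt{e\_bound} and at most an $\mathcal{O}(1)$ deletion, for a total of $\mathcal{O}(m)$. For Step~2 (Lines~12--17) I would use an aggregate (charging) argument: the loop body runs $n$ times, and the only non-constant work is ``remove all edges incident to $v$''; since each edge of $G$ has exactly two endpoints and is removed at most once over the whole loop, the total cost of all incident-edge removals is $\mathcal{O}(m)$, so Step~2 costs $\mathcal{O}(n+m)$.

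The main obstacle is Step~3 (Lines~18--21), the recursive deletion of vertices of degree $<k-1$, i.e.\ extraction of the $(k-1)$-core: a literal reading of the \textbf{while} loop that rescans all survivors for a low-degree vertex is superlinear, so the substance of the proof is to exhibit an $\mathcal{O}(n+m)$ implementation. I would invoke the standard linear-time core-peeling scheme (the Batagelj--Zaver\v{s}nik bucket-queue procedure): maintain the current degree $d(v)$ of every surviving vertex, a degree-indexed bucket array, and a queue $Q$ holding the vertices with $d(v)<k-1$. Initialising $d(\cdot)$ from the reduced adjacency lists and populating $Q$ is $\mathcal{O}(n+m)$; each iteration then pops one vertex in $\mathcal{O}(1)$, deletes it, and for each surviving neighbour decrements $d(\cdot)$ in $\mathcal{O}(1)$, enqueueing it if it drops below $k-1$. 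The decisive accounting point is that a degree-decrement across an edge $\{u,v\}$ occurs exactly once — when the first of its two endpoints is deleted — so the total number of decrements, and hence of queue insertions, is at most $m$, while vertex deletions number at most $n$; thus Step~3 also runs in $\mathcal{O}(n+m)$.

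Adding the four contributions gives an overall running time of $\mathcal{O}(n+m)$, as claimed. The only point where genuine care is needed is Step~3, where the bound rests on (i) using the degree-indexed bucket/queue instead of rescanning, and (ii) charging each degree update to a distinct edge; Steps~1 and~2 and the initialisation are routine linear scans with the standard incident-edge charging.
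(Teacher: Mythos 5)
Your proposal is correct, and in fact it is tighter than the paper's own argument, but it follows a genuinely different route. The paper does a per-line operation count: Step 1 costs at most $2m+m'$ (two comparisons per edge plus $m'$ deletions), Step 2 at most $2n+n'k$ (charging each of the $n'$ deleted vertices with at most $k-1$ incident edge deletions), and Step 3 at most $(n-n')(k+1)$, summing to something below $3m+n(3+k)$; the stated $\mathcal{O}(n+m)$ bound therefore implicitly treats $k$ as a constant, and the proof does not discuss how the \textbf{while} condition in Lines 18--21 locates a low-degree vertex, nor why a vertex deleted in Step 2 for low participation (which may have arbitrarily large degree) loses only $k-1$ edges. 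Your argument closes both of those gaps: by charging each edge deletion to the edge itself (each edge disappears exactly once over the whole run, regardless of which step removes it) you bound all incident-edge removals globally by $\mathcal{O}(m)$ without any dependence on $k$, and by specifying the bucket-queue (Batagelj--Zaver\v{s}nik) peeling with cross-pointered adjacency lists you justify that the $(k-1)$-core extraction really runs in $\mathcal{O}(n+m)$ rather than the superlinear cost of naively rescanning for low-degree vertices. What the paper's approach buys is brevity and independence from data-structure assumptions; what yours buys is a bound that is genuinely linear in $n+m$ uniformly in $k$ and an implementation-level justification of the while loop, which is where the real content of the claim lies.
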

\begin{proof}
Suppose $k$ is the estimated order of the maximum clique. 
Lines 7-11 of Algorithm \ref{algorithm1} are executed at most $m$ times.  Each iteration includes 2 comparisons and a possible edge deletion.  Suppose $m'$ edges are deleted.   Then a total of at most $2m+m'$ steps are executed.   Lines 12-17 are executed at most $n$ times.  Each iteration includes 2 comparisons, and a possible vertex and at most $(k-1)$ edge deletions.   Suppose $n'$ vertices are deleted.  Then a total of at most $2n+n'(1+k-1)=2n+n'k$ steps are executed.  Finally, Lines 18-21 are executed at most $n-n'$ times.  Each iteration, includes 1 comparison and a possible vertex and at most $(k-1)$ edge deletions.  Suppose $n''$ vertices are deleted in this step.  Then a total of at most $(n-n')+(n-n')(k-1+1) = (n-n')(k+1)$  The total steps executed are $2m+m'+2n+n'k+nk-n'k+n-n'=2m+m'+3n+nk-n'< 3m+n(3+k).$
\end{proof}

\section{Results on Social Networks}
\label{section:SNAP}
One of the interests in social network analysis lies in identifying communities or groups formed within a network. The identification of communities in social networks have been a well studied problem~\cite{bedi2016}. This is known as the community detection problem. Communities can be linked to the classification of objects in categories for the sake of memorisation or retrieval of information. The saying of ``\emph{birds of the same feather flock together}", means that people with similar hobbies or tastes tend to form groups. Thus, we can use community detection to identify users with a high number of degrees (connections) and see how far their reach can travel in the network.

Communities in social networks can provide important information in many domains~\cite{bedi2016}. Through community detection, we can classify functions of `people' according to their structural positions in their identified communities. Using this information, we can also identify hierarchical organisations that exist in such networks. For example, communities in social networks are a representation of individuals with similar interests or tastes, and can reveal similar habits or patterns. This information can be particularly useful for people to form research collaborations, or even for marketing and purchase recommendations. 

One method of community detection is through the identification of cliques in the network~\cite{bedi2016}. Thus, finding the maximum clique in a graph is a useful application. In this section, we demonstrate the usefulness of our graph reduction method in finding the maximum clique in some social networks.

\subsection{Open Source SNAP and Network Repository}

We find the maximum clique in three social networks. The first social network dataset was obtained from \href{https://snap.stanford.edu/data/ego-Facebook.html}{SNAP} \cite{mcauley2012}. The authors of \cite{mcauley2012} developed a version of the Facebook (\emph{FB}) application and conducted a survey on users of the application. The data collection is based on \emph{categorising} friends into different \emph{social circles}. Examples of social circles are relatives, friends who share the same sports, or friends who attend the same university. We will refer to the FB graph as $G_{1}$.

We also ran our algorithm on two social networks obtained from another open source named  \href{https://networkrepository.com/}{Network Repository} \cite{nr}: (i) one on the mutually liked TV show pages (\emph{TV Shows}), and (ii) one on friendships and family relationships between users of a given website (\emph{Hamsterster}). This website provides a lower bound for the maximum clique in these graphs. Table \ref{tab:others summary} provides a summary of the information of these social networks. For simplicity, we will refer to the TV graph as $G_{2}$ and Hamsterster graph as $G_{3}$.

\begin{table}[H]
    \centering
    \begin{tabular}{|p{5cm}||r|r|r|}
    \hline
    \textbf{Statistics}  & \emph{FB} ($G_{1}$) & \emph{TV Shows} ($G_{2}$) & \emph{Hamsterster} ($G_{3}$) \\
    \hhline{|=||=|=|=|}
    Order of graph $|V(G)|$ & 4039 & 3900 & 2500 \\
    \hline 
     Size of graph $|E(G)|$ & 88234 &  17262 & 16630 \\ %
     \hline
    Maximum vertex degree & 1045 & 126 & 273 \\
    \hline
    Minimum vertex degree & 1 & 1 & 1 \\
    \hline
    Average vertex degree  & 43.69 & 4.43 & 6.65  \\
  (standard deviation) &  (52.42) & (12.55) & (19.72) \\
    \hline
    Diameter of graph & 8 & - & - \\
    \hline 
   % Standard deviation & 52.42 \\
    Lower bound of maximum clique & - & 57 & 25 \\
    \hline
    \end{tabular}
    \caption{Summary of the social network data used in this paper, each labelled accordingly.}
    \label{tab:others summary}
\end{table}

\subsection{Method and Algorithm}
Algorithm \ref{algorithm1} requires parameters $k$ and $r$. The  initial value of $k$ was obtained by selecting the largest $k$ satisfying $\binom{k}{r}$ is at most the number of $r$-cliques in $G$. %using existing counts of small cliques. 
In this case, we use order $r$ for $r=5$. Table \ref{tab:time taken1} shows the exact counts and the time taken to compute for each $K_{r}$ in $G_{1}, G_{2}$ and $G_{3}$ respectively.  
\begin{table}[H]
    \centering
    \begin{tabular}{|c||c|r|r|r|}
    \hline 
    $r$ & \textbf{Graph} & \textbf{Time taken} (seconds) & \textbf{Number of $r$-cliques}  & Estimate $k$\\
    \hhline{|=||=|=|=|=|}
     \multirow{3}{*}{3}
     & $G_{1}$ & 47.90  & 1,612,010 & -\\
     & $G_{2}$ & 5.32 & 87,090 & -\\
     & $G_{3}$ & 5.97 & 53,251 & -\\
     \hline
     \multirow{3}{*}{4} 
     & $G_{1}$ &  852.46 & 30,004,668 & - \\
     & $G_{2}$ & 23.79 & 796,031 & - \\
     & $G_{3}$ & 19.09 & 132,809 & - \\
     \hline
     \multirow{3}{*}{5}
     & $G_{1}$ & 14724.89 & 517,944,123 & 109 \\
     & $G_{2}$ & 198.88 & 7,561,164 & 57\\
     & $G_{3}$ & 46.81 & 298,013 & 25\\
     \hline
    \end{tabular}
    \caption{Time taken to count the number of $K_{r}$s in $G_{1}\: (\text{FB}),\, G_{2}\: (\text{TV})$  and $G_{3}\: (\text{Hamsterster})$  including output file.}
    \label{tab:time taken1}
\end{table}

We first compute the vertex- and edge-participation of $r=5$ using the counts of $K_{5}$s in $G$. Algorithm \ref{algorithm1} finds the $k$-nub by removing  edges that do not meet the edge-participation threshold and vertices that do not meet the vertex-participation threshold, and then further reduces the graph by recursively removing vertices with degree $<k-1$. Our algorithm returns a much smaller graph of $n'<n$ vertices, which we then use to find the maximal or a maximum clique that appears in the original graph $G$ within a feasible amount of time.  

Our algorithm reduced the order of $G_{1}$ of order 4039 to 127 vertices ($\approx 97\%$), using $k=109$; the algorithm reduced the order of $G_{2}$ from 2500 to 25 vertices ($99\%$), using $k=25$, and it reduced the order of $G_{3}$ from 3900 to 61 vertices ($\approx 98\%$), using $k=57$. We were able to obtain the maximum clique for the latter two graphs and close bounds on the maximum clique of $G_{1}$ with $\omega \in [66,69].$

In some cases, the choice of $k$ for the reduction may  return the maximum clique in the graph. For $G_{1}$, the choice of $k$ differed to the order of the maximum clique that exists in $G_{1}$. However, for $G_{2}$ and $G_{3}$, when using a larger $k$ than the given bound, the reduction returned an empty graph. When using the given bound of $k=25$ for $G_{3}$, the graph reduced to a clique of order $25$, while $G_{2}$ reduced to an almost complete graph of order $61$ when using $k=57$. A summary of the results for all graphs are shown in Table \ref{tab:results tvhamster}. 

\begin{table}[H]
    \centering
    \begin{tabular}{|p{5cm}||r|r|r|}
    \hline
    \textbf{Results}  & \emph{FB} ($G_{1}$) & \emph{TV Shows} ($G_{2}$) & \emph{Hamsterster} ($G_{3}$) \\
    \hhline{|=||=|=|=|}
    Order of graph & 4039 & 3900 & 2500 \\
    \hline 
     Size of graph & 88234 &  17262 & 16630 \\ %
     \hline
     Choice of $k$ & 109 & 57 & 25 \\
    \hline
    Order of reduced graph & 127 & 61 & 25 \\
    \hline
    Size of reduced graph & 7634 & 1820  & 300 \\
    \hline
    Density of reduced graph & 0.9850 & 0.9945 & 1 \\
    \hline
    Maximum Clique & $66\leq \omega \leq 69$ & 57 & 25 \\
    \hline
    \end{tabular}
    \caption{Summary of results for graphs $G_{1}$, $G_{2}$ and $G_{3}$.}
    \label{tab:results tvhamster}
\end{table}

Finding the maximum clique is a NP-hard problem \cite{karp1972}. We ran the maximum clique algorithm  \cite{boppana1992,bron1973,tomita2006, cazals2008} as implemented in \emph{NetworkX} \cite{SciPyProceedings_11}  on all three social network graphs. However, when running on $G_{1}$, the program ran for several days and then crashed due to insufficient memory in the computer. A similar problem occurred, when using the \emph{maximal} clique algorithm \cite{boppana1992,bron1973} implemented in NetworkX.

\section{Results on Erd\"{o}s-R\'{e}nyi Random Graphs}\label{section 5}

We applied our graph reduction on random graphs of different orders and densities. We show the time taken to reduce $G$ using our method, and also compare the computation time for $r$-cliques in each graph as $r$ increases for $r=3,4,5$. We generated $10$ random graphs for each density $\rho(G)=0.1,0.3,0.4$ with order $n=|V(G)|=1000,2000,4000$. 

During the pre-processing stage, as $\rho(G)$ increases with $N$, it became computationally more expensive (i.e., computer memory and time taken) even when counting $K_{4}$s in graphs of $n=1000$ and $\rho(G)=0.1$. Thus, we only count cliques of order $r=3$ and record the time taken, and also the time taken to reduce all graphs in $G$ for all $n$. We further compared the time taken to find the maximum/maximal clique in both $G$ and reduced graph $G'$. The table below shows the average time taken to count $K_{3}$s for each $G$. 

\begin{table}[H]
    \centering
    \begin{tabular}{|c|c|r|c|}
    \hline
     $|V(G)|$  & $\rho(G)$ & Average Time (seconds) & $N$  \\
     \hhline{|=|=|=|=|}
      \multirow{3}{*}{1000} & 0.1 & 10 &  10\\ 
    & 0.3 & 20 & 10 \\ 
    & 0.4  & 60& 10\\ 
    \hline
     \multirow{3}{*}{2000} & 0.1 & 60 &10\\ 
    & 0.3 & 180 &10 \\ 
    & 0.4 & 300 &10\\
    \hline
     \multirow{3}{*}{4000} & 0.1 & 300 &10 \\
    & 0.3 & 1380 & 10 \\
    & 0.4 & 6000 & 10\\
    \hline
    \end{tabular}
    \caption{Time taken on average to count $K_{3}$s in the Erd\"{o}s-R\'{e}nyi random graphs.}
    \label{K3 time}
\end{table}

We ran our graph reduction algorithm on each of these randomly generated graphs using their respective $K_{3}$ counts, and also the vertex- and edge-participation of order 3. The approximate time taken for the reduction for each graph is given in the Appendix section. Table \ref{summary of results for random graph} summarises these results.

\begin{table}[H]
    \centering
    \begin{tabular}{|c|c|p{0.1\linewidth}|p{0.13\linewidth}|r|p{0.14\linewidth}|p{0.06\linewidth}|p{0.15\linewidth}|}
    \hline
     \multirow{2}{*}{$|V(G)|$} & \multirow{2}{*}{$\rho(G)$} & Reduction &   Reduced \%  & \multirow{2}{*}{$k$} 
     %\multirow{2}{*}{\multicolumn{1}{c}{$k$}} 
     & NetworkX $G$ & Max &  NetworkX $G'$ \\
      &  & $\mu_{t}$, ($\sigma$) & $\mu_{\%}$, ($\sigma$) &  & $\mu_{t}$, ($\sigma$) & Clique & $\mu_{t}$, ($\sigma$) \\
     \hhline{|=|=|=|=|=|=|=|=|}
      \multirow{6}{*}{1000} & \multicolumn{1}{r|}{0.1} & \multicolumn{1}{r|}{11.14s} & \multicolumn{1}{r|}{0.385} & \multicolumn{1}{r|}{15} & \multicolumn{1}{r|}{119.34s} & \multicolumn{1}{r|}{5} & \multicolumn{1}{r|}{22.12s} \\ 
      & & \multicolumn{1}{r|}{(0.43s)} & \multicolumn{1}{r|}{(0.183)} & & \multicolumn{1}{r|}{(1.94s)} & & \multicolumn{1}{r|}{(0.98s)} \\
      & \multicolumn{1}{r|}{0.3} & \multicolumn{1}{r|}{11.08s} & \multicolumn{1}{r|}{0.338}  & \multicolumn{1}{r|}{93}  & \multicolumn{1}{r|}{216.95s} & \multicolumn{1}{r|}{9} & \multicolumn{1}{r|}{27.62s}  \\
      & & \multicolumn{1}{r|}{(0.25s)} & \multicolumn{1}{r|}{(0.050)} & & \multicolumn{1}{r|}{(1.69s)} & & \multicolumn{1}{r|}{(1.58s)} \\
      & \multicolumn{1}{r|}{0.4} & \multicolumn{1}{r|}{11.40s} & \multicolumn{1}{r|}{0.327} & \multicolumn{1}{r|}{161} & \multicolumn{1}{r|}{2802.65s} & \multicolumn{1}{r|}{11} & \multicolumn{1}{r|}{99.49s} \\
      & & \multicolumn{1}{r|}{(0.59s)} & \multicolumn{1}{r|}{(0.042)} & & \multicolumn{1}{r|}{(27.66s)} & & \multicolumn{1}{r|}{(1.95s)} \\
      \hline
      \multirow{6}{*}{2000} & \multicolumn{1}{r|}{0.1} & \multicolumn{1}{r|}{20.07s} & \multicolumn{1}{r|}{0.318}  &\multicolumn{1}{r|}{26} & \multicolumn{1}{r|}{845.38s}  & \multicolumn{1}{r|}{5} & \multicolumn{1}{r|}{376.48s} \\ 
      & & \multicolumn{1}{r|}{(0.49s)} & \multicolumn{1}{r|}{(0.039)} & & \multicolumn{1}{r|}{(89.52s)} & & \multicolumn{1}{r|}{(7.50s)} \\
      & \multicolumn{1}{r|}{0.3} & \multicolumn{1}{r|}{20.04s} & \multicolumn{1}{r|}{0.272} & \multicolumn{1}{r|}{183} & \multicolumn{1}{r|}{5183.24s} & \multicolumn{1}{r|}{10} & \multicolumn{1}{r|}{372.54s} \\
      & & \multicolumn{1}{r|}{(0.54s)} & \multicolumn{1}{r|}{(0.038)} & & \multicolumn{1}{r|}{(17.98s)} & & \multicolumn{1}{r|}{(32.41s)} \\
      & \multicolumn{1}{r|}{0.4} & \multicolumn{1}{r|}{20.07s} & \multicolumn{1}{r|}{0.341} & \multicolumn{1}{r|}{320} & \multicolumn{1}{r|}{NA} & \multicolumn{1}{r|}{12} & \multicolumn{1}{r|}{2761.16s} \\
      & & \multicolumn{1}{r|}{(0.61s)} & \multicolumn{1}{r|}{(0.039)} & &  & & \multicolumn{1}{r|}{(50.57s)} \\
      \hline
      \multirow{6}{*}{4000} & \multicolumn{1}{r|}{0.1} & \multicolumn{1}{r|}{80.27s} & \multicolumn{1}{r|}{0.407}  & \multicolumn{1}{r|}{48} & \multicolumn{1}{r|}{6416.01s}  & \multicolumn{1}{r|}{5} & \multicolumn{1}{r|}{2237.91s}  \\ 
      & & \multicolumn{1}{r|}{(1.14s)} & \multicolumn{1}{r|}{(0.180)} & & \multicolumn{1}{r|}{(139.49s)} & & \multicolumn{1}{r|}{(67.59s)} \\
      & \multicolumn{1}{r|}{0.3} & \multicolumn{1}{r|}{86.92s} & \multicolumn{1}{r|}{0.375} & \multicolumn{1}{r|}{365} & \multicolumn{1}{r|}{NA} & \multicolumn{1}{r|}{10} & \multicolumn{1}{r|}{2476.00s} \\
      & & \multicolumn{1}{r|}{(2.11s)} & \multicolumn{1}{r|}{(0.046)} & & & & \multicolumn{1}{r|}{(195.85s)} \\
      & \multicolumn{1}{r|}{0.4} & \multicolumn{1}{r|}{82.38s} & \multicolumn{1}{r|}{0.321} & \multicolumn{1}{r|}{638} & \multicolumn{1}{r|}{NA} & \multicolumn{1}{r|}{12} & \multicolumn{1}{r|}{NA} \\
      & & \multicolumn{1}{r|}{(0.71s)} & \multicolumn{1}{r|}{(0.033)} & & & &  \\
      \hline
    \end{tabular}
    \caption{Results obtained (on average, $\mu$) of the time taken $t$, and the percentage ($\%$) of the graph reduced by (on average, $\mu$) from the Erd\"{o}s-R\'{e}nyi random graphs $G$ and reduced graphs $G'$. }
    \label{summary of results for random graph}
\end{table}

We compare the time $t$ taken to find the maximum clique, or in some cases a maximal clique, in both $G$ and the reduced graph $G'$. We use the existing NetworkX implementation to find the maximum clique of $G$ and $G'$. Further details of the recorded time can also be found in Appendix. 

In cases where NetworkX was unable to find a maximum clique (i.e., inefficient computation time), we were able to use a brute force method to find a maximal clique in the reduced graph $G'$ for $n\!>2000$ that can be found within seconds. Our brute force method is a greedy algorithm (Algorithm \ref{algorithm2}) which takes in an input $r$ and checks all possible neighbouring vertices. We have demonstrated that the reduction algorithm enables the identification of a maximal clique that exists in a large graph within a feasible computation time.

\begin{algorithm}[H]
\caption{Brute Force Algorithm}
\label{algorithm2}
\begin{algorithmic}[1]
\Require Graph $G=(V,E)$, $r$, $\delta(v)$
\Ensure Maximal Clique
\State{bestClique = $r$}
\State{$S=[\:]$}
\State{best = $len(\text{bestClique})$}
\For{$v$ \textbf{in} $V$}
\If{$\delta(v)>$best}
\State{$S=[v]$}
\For{$w$ \textbf{in} $V\setminus S$}
\If{$N(w) \subseteq S$}
\State{S.append($w$)}
\If{$len(S)>$best}
\State{best=$len(S)$}
\State{bestClique=$S$}
\EndIf
\EndIf
\EndFor
\EndIf
\EndFor
\State \Return{best}
\end{algorithmic}
\end{algorithm}     

\begin{comment}
(put in algorithm) 
bestClique = ???
best=len(bestClique)
for each $v\in V$
   if deg(v)>best: 
        S=[v]
        for each $w \in V\setminus S$: 
            if neighbours of $w$,$N(w) \subseteq S$:  
                S.append(w)
        if len(S)>best: 
             best=len(S)
             bestClique=S
output best and Save
\end{comment}

Table \ref{summary of results for random graph} shows that there is a significant reduction in the time taken to find a maximum clique in the reduced graph $G'$ than in $G$. In most cases, the time taken $t$ to find a maximum clique in $G$ and $G'$ were reduced by between $\approx 65\%$ and $\approx 88\%$. In cases where the order of the graph is larger and denser (i.e., $|V(G)|=2000,4000,\rho(G)=0.2,0,4$), NetworkX  was unable to find the maximum clique of $G$ within a feasible time. However, NetworkX was able to find a maximum clique in the reduced graph $G'$. In the only case where NetworkX was unable to find the maximum clique in $G'$ (i.e., $|V(G)|=4000$, $\rho(G)=0.4$), we find a maximal clique instead. Results are shown in the Appendix.

Our algorithm has also demonstrated a significant reduction in the order of the graphs. Table \ref{summary of results for random graph} shows that the $k$-nub contains between $\approx 60\%$ to $\approx 73\%$ of the vertices of  the original  graphs $G$ with varying orders. Upon reduction for graphs of order $<1000$, NetworkX was able to quickly compute the maximum clique with an average computation time of 22s. There was also a significant reduction in the time taken to find a maximum clique for the reduced graphs of higher order.

\subsection{The $k$-nub and the $(k-1)$-core}

The $k$-nub can significantly reduce the graph compared to simply finding a $(k-1)$-core itself.  We compared the $(k-1)$-core to the $k$-nub of $G$. We use the same $k$ value for finding the $k$-nub to show the significance in the graph reduction. We found that by using our known $k$ value to find the $(k-1)$-core for each of the random graph, this resulted in the original graph, which states that our $k$ value is too small for the $(k-1)$-core. 

Note that in order to find the $(k-1)$-core, this requires prior knowledge of a $k$ value. However, NetworkX also has an implemented $K$-core algorithm which returns the maximal connected induced subgraph of $G$, such that all vertices are degree at least $K$. Note that the $K$ value used in NetworkX is not the same as the $k$ value that we have found. The $K$-core found by using the NetworkX implementation only removed between 4 and 86 vertices ($<5\%$) amongst all random graphs.  

 \begin{table}[H]
 \centering
 \begin{tabular}{|c||c|c|c|}
 \hline
   & $\rho(G)=0.1$ & $\rho(G)=0.3$ & $\rho(G)=0.4$\\
  \hhline{|=||=|=|=|}
  $n=1000$ & 63.53\% & 67.17\% & 68.35\% \\
  ($\sigma$) & (18.66\%) & (5.28\%) & (4.24\%)\\
  \hline
  $n=2000$ & 70.11\% & 73.66\% & 66.34\% \\ 
  ($\sigma$) & (4.03\%) & (3.95\%) & (3.92\%) \\
  \hline
 $n=4000$ & 66.23\% & 63\% & 68.37\% \\
 ($\sigma$) & (3.68\%) & (4.65\%) & (3.36\%) \\
\hline
 \end{tabular}
 \caption{The percentage \% of vertices (on average) of the $k$-nub in the $k$-core.}
 \label{kcoreresults}
\end{table}

Table \ref{kcoreresults} shows on average the percentage of vertices of the $k$-nub in the $k$-core. This shows that the $k$-nub was able to reduce the graph by a further 33\% on average, compared to the reduction based on finding the $k$-core alone. In random graphs where the vertex degrees are uniformly distributed, the $k$-core itself was not sufficient to reduce the graph to a more manageable size. 

\section{Conclusion and Future Work}\label{conclusion}

In this paper, we have introduced the concept of a $k$-nub, a reduction of the original graph based on \emph{vertex}- and \emph{edge-participation} of order $r$ in a graph $G$.  We demonstrate that the $k$-nub can be used to find a maximum or a maximal clique in the original graph.   The significant reduction in order of the original graph, enabled these cliques to be identified in cases where the graph is too large for existing algorithms, such as NetworkX, to run in a feasible amount of time and space.

Results on three real-world social network graphs  showed the $k$-nub contained at most $3\%$ of the  vertices in the original graph and, thus, there was a significant reduction in the time taken to find a maximum or a maximal clique in $G$.  
Experiments on random graphs showed that the $k$-nub contained approximately $41\%$ to $88\%$ of the vertices of the original graph. 

In comparison to the existing \emph{NetworkX} implementation of the \emph{maximum clique} algorithm to find a maximum clique which took hours, or  returned `Memory Error', to compute for $n>\!2000$ (depending on density of the graph), our method showed significant improvement in computation time. 

Our experiments in random networks demonstrate that the $k$-nub is significantly smaller than the $(k-1)$-core alone. Using our $k$ value to find the $(k-1)$-core alone, this resulted in the original graph. Thus, we use the NetworkX implemented $K$-core algorithm to find the $K$-core of each of the random graphs and compared this to the $k$-nub. Note that $K$ and $k$ are different in this case. The $K$-core of all graphs only reduced the graphs by at most 6\%, while the $k$-nub gave a reduction of between 11\% and 59\%. 

Our experiments suggest that using the $k$-nub to identify large cliques is particularly useful in large graphs with low density that contain many  \emph{communities} (or clusters), as commonly seen in social networks. This observation is based on the percentage of reduction: in social networks, there is about a 98\% reduction compared to the random graphs, which were about 35\% reduction on average for all $N$.

In future work, we will look at first finding a $k$-core $H$ for some $k$ and then obtaining the vertex- and edge-participation in $H$ and then find a $k$-nub of $H$. This may reduce the time required to count the small $r$-cliques. 

Further, we may also consider using a similar reduction method on \emph{non-complete} graphs or graphs that are near complete (complete graphs with a few edges removed) to obtain (bounds) on the maximum/maximal clique. 

\begin{appendices} 

\section{Details of Results}

\begin{table}[H]
    \centering
    \begin{tabular}{|c|c|p{0.1\linewidth}|c|c|p{0.1\linewidth}|p{0.1\linewidth}|p{0.1\linewidth}|p{0.1\linewidth}|}
    \hline
     $|V(G)|$  & $\rho(G)$ & Reduction Time ($G$) &  $|V(G')|$  & $k$ & NetworkX Time ($G$) & Maximum Clique & NetworkX Time ($G'$) & $|K-\text{core}|$ \\
     \hhline{|=|=|=|=|=|=|=|=|=|}
    \multirow{10}{*}{1000} & 0.1 & 10.52s & 483 & 15 & 118.58s  & 6 & 22.34s & 938 \\ 
    & 0.1 & 11.06s & 886 & 14 & 119.56s & 6 &  23.06s & 979\\ 
    & 0.1 & 11.20s & 880 & 14 & 117.34s & 6 &  21.63s & 981\\ 
    & 0.1 & 10.81s & 604 & 15 & 120.56s & 5 &  20.56s & 959\\ 
    & 0.1 & 11.32s & 493 & 15 & 123.55s & 6 &  24.12s & 967\\ 
    & 0.1 & 11.76s & 872 & 14 & 118.33s & 6 &  22.43s & 960\\ 
    & 0.1 & 10.75s & 411 & 15 & 121.56s & 6 &  22.36s & 972\\ 
    & 0.1 & 11.58s & 580 & 15 & 118.65s & 5 &  20.78s & 972\\ 
    & 0.1 & 10.69s & 531 & 15 & 118.39s & 5 &  22.04s & 957\\ 
    & 0.1 & 11.74s & 410 & 15 & 116.83s & 5 &  21.87s & 990\\ 
    \hline
    \multirow{10}{*}{1000} & 0.3 & 215.65s & 649 & 93 & 28.15s & 9 & 11.56s & 988\\
    & 0.3 & 10.69s &  689 & 93 & 217.05s & 8 & 25.54s & 984\\
    & 0.3 & 11.24s &  662 & 93 & 217.80s & 9 & 26.33s & 985\\
    & 0.3 & 10.57s &  565 & 93 & 214.93s & 8 & 25.12s & 988\\
    & 0.3 & 11.26s &  684 & 93 & 217.35s & 9 & 28.67s & 974\\
    & 0.3 & 10.96s &  683 & 93 & 217.76s & 9 & 26.73s & 993\\
    & 0.3 & 11.15s &  595 & 93 & 214.15s & 9 & 28.47s & 992\\
    & 0.3 & 10.94s &  748 & 93 & 220.46s & 9 & 30.56s & 977\\
    & 0.3 & 11.05s &  699 & 92 & 217.86s & 8 & 28.68s & 987\\
    & 0.3 & 11.13s &  645 & 93 & 216.44s & 9 & 27.92s & 990\\
    \hline
    \multirow{10}{*}{1000} & 0.4 & 11.03s & 629 & 161 & 2776.76s & 11 & 97.56s & 994\\
    & 0.4 & 12.45s & 627 & 161 & 2794.05 & 11 & 100.06ss & 989\\
    & 0.4 & 11.29s & 669 & 160 & 2847.62s & 11 & 100.33s & 986\\
    & 0.4 & 11.59s & 653 & 161 & 2784.72s & 11 & 96.94s & 982\\
    & 0.4 & 10.67s & 721 & 160 & 2768.33s & 11 & 99.66s & 980\\
    & 0.4 & 12.21s & 720 & 159 & 2831.53s & 11 & 100.92s & 989\\
    & 0.4 & 10.84s & 602 & 161 & 2775.18s & 11 & 96.23s & 970\\
    & 0.4 & 11.75s & 724 & 160 & 2844.09s & 11 & 102.58s & 986\\
    & 0.4 & 10.64s & 674 & 160 & 2795.93s & 11 & 99.02s & 982\\
    & 0.4 & 11.54s & 709 & 160 & 2808.30s & 11 & 101.58s & 985\\
    \hline
    \end{tabular}
    \caption{The table shows time taken using our graph reduction algorithm for $n=1000$ and $\rho(G)=0.1,0.3,0.4$ respectively. Time taken for NetworkX to find the maximum clique is also shown.}
    \label{tab:results for 1000}
\end{table}

\begin{table}[H]
    \centering
    \begin{tabular}{|c|c|p{0.1\linewidth}|c|c|p{0.1\linewidth}|p{0.1\linewidth}|p{0.1\linewidth}|p{0.1\linewidth}|}
    \hline
     $|V(G)|$  & $\rho(G)$ & Reduction Time ($G$) &  $|V(G')|$  & $k$ & NetworkX Time ($G$) & Maximum Clique & NetworkX Time ($G'$) & $|K-\text{core}|$  \\
     \hhline{|=|=|=|=|=|=|=|=|=|}
     \multirow{10}{*}{2000} & 0.1 & 19.64s & 1391 & 26 & 874.59s & 5 & 388.56s & 1946\\
    & 0.1 & 20.22s & 1447 & 26 & 890.35s & 5 &  368.86s & 1944\\
    & 0.1 & 19.10s & 1455 & 26 & 901.24s & 6 & 386.41s & 1962\\
    & 0.1 & 20.56s & 1362 & 26 & 886.82s & 5 & 377.16s & 1953\\
    & 0.1 & 20.78s & 1360 & 26 & 870.87s & 5 & 376.03s & 1934\\
    & 0.1 & 20.11s & 1174 & 26 & 850.79s & 6 & 360.57s & 1957\\
    & 0.1 & 19.75s & 1433 & 26 & 846.71s & 5 & 378.40s & 1950\\
    & 0.1 & 19.89s & 1358 & 26 & 866.25s & 5 & 376.33s & 1915\\
    & 0.1 & 20.64s & 1309 & 26 & 860.18s & 5 & 375.15s & 1959\\
    & 0.1 & 19.97s & 1359 & 26 & 873.02s & 5 & 377.28s & 1948\\
    \hline
     \multirow{10}{*}{2000} & 0.3 & 20.12s & 1355 & 183 & 5153.89s & 9 & 373.30s & 1975 \\
    & 0.3 & 19.57s & 1524 & 183 & 5200.69s & 10 & 395.08s & 1962\\
    & 0.3 & 19.25s & 1407 & 183 & 5191.42s & 10 & 278.25s & 1986\\
    & 0.3 & 19.79s & 1403 & 183 & 5174.67s & 10 & 375.85s & 1988\\
    & 0.3 & 20.66s & 1546 & 183 & 5197.96s & 9 & 384.49s & 1976\\
    & 0.3 & 20.47s & 1341 & 183 & 5151.10s & 10 & 370.16s & 1979\\
    & 0.3 & 20.83s & 1520 & 183 & 5203.52s & 9 & 388.79s & 1978\\
    & 0.3 & 19.22s & 1518 & 183 & 5192.65s & 10 & 389.82s & 1973\\
    & 0.3 & 20.34s & 1409 & 183 & 5173.78s & 10 & 378.50s & 1972\\
    & 0.3 & 20.11s & 1532 & 183 & 5192.65s & 9 & 391.17s & 1975\\
    \hline
    \multirow{10}{*}{2000} & 0.4 & 20.95s & 1177 & 320 & Memory Error & 12 & 2659.71s & 1987\\
    & 0.4 & 20.38s & 1345 & 320 & Memory Error & 12 & 2789.38s & 1984\\
    & 0.4 & 19.91s & 1388 & 320 & Memory Error & 12 & 2798.21s & 1995\\
    & 0.4 & 19.89s & 1349 & 320 & Memory Error & 12 & 2756.96s & 1982\\
    & 0.4 & 19.24s & 1349 & 320 & Memory Error & 12 & 2774.23s & 1987\\
    & 0.4 & 19.11s & 1441 & 320 & Memory Error & 12 & 2840.09s & 1990\\
    & 0.4 & 21.03s & 1257 & 320 & Memory Error & 12 & 2712.30s & 1996\\
    & 0.4 & 19.77s & 1382 & 319 & Memory Error & 12 & 2812.19s & 1980\\
    & 0.4 & 20.44s & 1236 & 320 & Memory Error & 12 & 2722.89s & 1972\\
    & 0.4 & 20.02s & 1250 & 319 & Memory Error & 12 & 2745.67s & 1985\\ 
    \hline
    \end{tabular}
    \caption{Table shows time taken using our graph reduction algorithm for $n=2000$ and $\rho(G)=0.1,0.3,0.4$ respectively. Time taken for NetworkX to find the maximum clique is also shown.}
    \label{tab:results for 2000}
\end{table}

\begin{table}[H]
    \centering
    \begin{tabular}{|c|c|p{0.1\linewidth}|c|c|p{0.1\linewidth}|p{0.1\linewidth}|p{0.1\linewidth}|p{0.1\linewidth}|}
    \hline
     $|V(G)|$  & $\rho(G)$ & Reduction Time ($G$) &  $|V(G')|$  & $k$ & NetworkX Time ($G$) & Maximum Clique & NetworkX Time ($G'$) & $|K-\text{core}|$\\
     \hhline{|=|=|=|=|=|=|=|=|=|}
      \multirow{10}{*}{4000} & 0.1 & 81.62s & 2431 & 48 & 6397.12s & 5 & 2170.29s & 3928\\
    & 0.1 & 77.94s  & 2467 & 48 & 6245.25s & 5 & 2179.91s & 3944\\
    & 0.1 & 81.14s & 2624 & 48 & 6378.87s & 5 & 2222.77s & 3914\\
    & 0.1 & 81.25s & 2672 & 48 & 6309.69s & 5 & 2211.48s & 3929\\
    & 0.1 & 78.95s & 2711 & 49 & 6682.31s & 6 & 2321.84s & 3944\\
    & 0.1 & 79.66s & 2587 & 48 & 6219.77s & 5 & 2203.12s & 3939\\
    & 0.1 & 80.22s & 2701 & 48 & 6437.82s & 6 & 2290.86s & 3932\\
    & 0.1 & 81.45s & 2816 & 49 & 6598.12s & 5 & 2345.59s & 3937\\
    & 0.1 & 79.95s & 2722 & 48 & 6507.04s & 7 & 2298.02s & 3942\\
    & 0.1 & 80.55s & 2323 & 48 & 6384.13s & 5 & 2135.18s & 3928\\
    \hline
      \multirow{10}{*}{4000} & 0.3 & 89.85s & 2226 & 365 & Memory Error & 10 & 2254.02s & 3967 \\
    & 0.3 & 84.50s & 2600 & 365 & Memory Error & 10 & 2562.85s & 3960\\
    & 0.3 & 83.66s & 2390 & 363 & Memory Error & 11 & 2271.79s & 3952\\
    & 0.3 & 88.46s & 2678 & 365 & Memory Error & 10 & 2442.07s & 3975\\
    & 0.3 & 87.19s & 2474 & 365 & Memory Error & 10 & 2476.86s & 3976\\
    & 0.3 & 88.33s & 2141 & 364 & Memory Error & 11 & 2166.88s & 3977\\
    & 0.3 & 84.66s & 2649 & 365 & Memory Error & 10 & 2483.02s & 3970\\
    & 0.3 & 85.16s & 2563 & 365 & Memory Error & 11 & 2578.14s & 3978\\
    & 0.3 & 89.21s & 2741 & 365 & Memory Error & 10 & 2845.96s & 3983\\
    & 0.3 & 88.18s & 2554 & 364 & Memory Error & 10 & 2678.45s & 3971\\
    \hline
      \multirow{10}{*}{4000} & 0.4 & 81.24s & 2651 & 638 & Memory Error & NA &  Memory Error & 3975\\
    & 0.4 & 81.57s & 2787 & 638 & Memory Error & NA & Memory Error & 3979\\
    & 0.4 & 83.41s & 2893 & 638 & Memory Error & NA & Memory Error & 3967\\
    & 0.4 & 82.55s & 2532 & 638 & Memory Error & NA & Memory Error & 3972\\
    & 0.4 & 83.16s & 2746 & 638 & Memory Error & NA & Memory Error & 3975\\
    & 0.4 & 81.67s & 2778 & 638 & Memory Error & NA & Memory Error & 3972\\
    & 0.4 & 81.97s & 2766 & 638 & Memory Error & NA & Memory Error & 3975\\
    & 0.4 & 82.88s & 2699 & 638 & Memory Error & NA & Memory Error & 3989\\
    & 0.4 & 82.31s & 2863 & 639 & Memory Error & NA & Memory Error & 3958\\
    & 0.4 & 83.02s & 2453 & 638 & Memory Error & NA & Memory Error & 3977\\
    \hline
    \end{tabular}
    \caption{The table shows time taken using our graph reduction algorithm for $n=4000$ and $\rho(G)=0.1,0.3,0.4$ respectively. Time taken for NetworkX to find the maximum clique is also shown.}
    \label{tab:results for 4000}
\end{table}

\begin{table}[H]
    \centering
    \begin{tabular}{|c|c|p{0.1\linewidth}|c|c|p{0.1\linewidth}|p{0.1\linewidth}|p{0.1\linewidth}|}
    \hline
     $|V(G)|$  & $\rho(G)$ & Reduction Time ($G$) & $|V(G')|$  & $k$ & NetworkX Time ($G$) & Maximal Clique & NetworkX Time ($G'$) \\
     \hhline{|=|=|=|=|=|=|=|=|}
     \multirow{10}{*}{4000} & 0.4 & 81.24s & 2651 & 638 & 4223.10s & 11 & 1334.84s \\
    & 0.4 & 81.57s & 2787 & 638 & 4126.55s & 12 & 1316.78s \\
    & 0.4 & 83.41s & 2893 & 638 & 4245.37s & 11 & 1367.10s \\
    & 0.4 & 82.55s & 2532 & 638 & 4245.70s & 11 & 1311.05s \\
    & 0.4 & 83.16s & 2746 & 638 & 4188.53s & 11 & 1335.59s \\
    & 0.4 & 81.67s & 2778 & 638 & 4240.57s & 12 & 1369.93s \\
    & 0.4 & 81.97s & 2766 & 638 & 4178.86s & 11 & 1380.70s \\
    & 0.4 & 82.88s & 2699 & 638 & 4144.87s & 11 & 1321.22s \\
    & 0.4 & 82.31s & 2863 & 639 & 4315.92s & 12 & 1391.25s \\
    & 0.4 & 83.02s & 2453 & 638 & 4113.51s & 11 & 1205.51s \\
    \hline
    \end{tabular}
    \caption{The table shows the difference in time taken to find the maximal clique in the original graph $G$ compared to the reduced graph $G'$.}
    \label{tab:results for 4000 maximal}
\end{table}

\end{appendices}

\bibliographystyle{plainnat}
\bibliography{main}

\end{document}